\newcommand{\add}[1]{\textcolor{green}{#1}}
\newcommand{\del}[1]{\textcolor{red}{#1}}
\renewcommand{\add}[1]{#1}
\renewcommand{\del}[1]{}
\newcommand{\kw}[1]{\mbox{\pyinline{#1}}}
\newcommand{\kwtt}[1]{\mbox{\ttinline{#1}}}
\newcommand{\mquot}[1]{\left(#1\right)}
\newcommand{\calcname}{$\lambda^O$}
\newcommand{\toolname}{\textsc{Opal}}
\newcommand{\toolacronym}{\toolname{}\footnote{\toolname{} stands for OPportunistic evALuation.}}
\newenvironment{inlinebox}{
    \vspace{\baselineskip}
    \begin{center}
    \footnotesize
}{
    \end{center}
    \vspace{\baselineskip}
}
\newcommand{\loopbench}{CityExcursions}
\newcommand{\pipelinebench}{TextToSpeech}
\newcommand{\toolusebench}{FactCheck}
\newcommand{\totbench}{TreeofThoughts}
\newcommand{\jsonbench}{JSONDec}
\newcommand{\efun}[2]{\kw{fun}\ {#1} \kw{:}\ {#2}}
\newcommand{\eapp}[2]{{#1}\ {#2}}
\newcommand{\etuple}[1]{\kw{(}{#1}\kw{)}}
\newcommand{\eproj}[2]{\kw{prj}\ {#1}\ {#2}}
\newcommand{\eprim}[1]{\kw{<}\,{#1}\,\kw{>}}
\newcommand{\etask}[2]{\kw{<<}#1 \ #2\kw{>>}}
\newcommand{\sep}{~\mid~}
\newcommand{\sstmt}[2]{{#1}\ \kw{:=}\ {#2}}
\newcommand{\ssep}[0]{\kw{;}\ }
\newcommand{\sfun}[3]{\sstmt{#1}{\efun{#2}{#3}}}
\newcommand{\sapp}[3]{\sstmt{#1}{\eapp{#2}{#3}}}
\newcommand{\stuple}[2]{\sstmt{#1}{\etuple{#2}}}
\newcommand{\sproj}[3]{\sstmt{#1}{\eproj{#2}{#3}}}
\newcommand{\sprim}[2]{\sstmt{#1}{\eprim{#2}}}
\newcommand{\stask}[3]{\sstmt{#1}{\etask{#2}{#3}}}
\newcommand{\sret}[1]{\kw{ret}\ {#1}}
\newcommand{\obsprim}[3]{{#2} \Uparrow_{#1} {#3}}
\newcommand{\freevars}[1]{\operatorname{fv}\left(#1\right)}
\newcommand{\subst}[3]{{#1}\left\{{#2}/{#3}\right\}}
\newcommand{\freshen}[2]{\operatorname{freshen}_{#1}\left(#2\right)}
\newcommand{\stepreduce}[3]{{#1} \Rightarrow^S_{#2} {#3}}
\newcommand{\stepdispatch}[3]{{#1} \Rightarrow^D_{#2} {#3}}
\newcommand{\stepresolve}[3]{{#1} \Rightarrow^R_{#2} {#3}}
\newcommand{\stepfull}[3]{{#1} \Rightarrow_{#2} {#3}}
\newcommand{\stepat}[4]{{#1} \Rightarrow_{#2}^{#3} {#4}}
\newcommand{\vsubst}[2]{\left\{{#1}/{#2}\right\}}
\newcommand{\decomp}[1]{\left\lbrack{#1}\right\rbrack}
\newcommand{\expand}[1]{\left\lbrack\hspace{-0.25em}\left\lbrack{#1}\right\rbrack\hspace{-0.25em}\right\rbrack}
\newcommand{\initstate}[2]{\operatorname{initial}_{#1}({#2})}
\newcommand{\erasestate}[1]{\operatorname{erase}({#1})}
\newcommand{\ctxident}[1]{\operatorname{ident}({#1})}
\newcommand{\sreplace}[4]{\operatorname{replace}({#1}, {#2}, {#3}, {#4})}
\title{Opportunistically Parallel Lambda Calculus}
\author{Stephen Mell}
\affiliation{%
  \institution{University of Pennsylvania}
  \country{USA}
}
\email{sm1@cis.upenn.edu}
\author{Konstantinos Kallas}
\affiliation{%
  \institution{University of California, Los Angeles}
  \country{USA}
}
\email{kkallas@ucla.edu}
\author{Steve Zdancewic}
\affiliation{%
  \institution{University of Pennsylvania}
  \country{USA}
}
\email{stevez@cis.upenn.edu}
\author{Osbert Bastani}
\affiliation{%
  \institution{University of Pennsylvania}
  \country{USA}
}
\email{obastani@seas.upenn.edu}
\date{April 2024}
\begin{abstract}
Scripting languages are widely used to compose \emph{external calls} such as native libraries and network services. In such scripts, execution time is often dominated by waiting for these external calls, rendering traditional single-language optimizations ineffective. To address this, we propose a novel \emph{opportunistic} evaluation strategy for scripting languages based on a core lambda calculus that \emph{automatically} dispatches independent external calls in parallel and streams their results. We prove that our approach is confluent, ensuring that it preserves the programmer's original intent, and that it eventually executes every external call. We implement this approach in a scripting language called \toolname{}. We demonstrate the versatility and performance of \toolname{}, focusing on programs that invoke heavy external computation through the use of large language models (LLMs) and other APIs.  Across five scripts, we compare to several state-of-the-art baselines and show that opportunistic evaluation improves total running time (up to $6.2\times$) and latency (up to $12.7\times$) compared to standard sequential Python, while performing very close (between $1.3\%$ and $18.5\%$ running time overhead) to hand-tuned manually optimized asynchronous Rust. For Tree-of-Thoughts, a prominent LLM reasoning approach, we achieve a $6.2\times$ performance improvement over the authors' own implementation.
\end{abstract}
\keywords{parallelization, concurrency, streaming, scripting, LLM agents}
\begin{document}
\maketitle
\section{Introduction}

A key application of scripting languages such as Python and the shell 
is to ``glue'' together \emph{external calls}, i.e., algorithms implemented as foreign calls to low-level languages or network calls to remote APIs, whose implementations are opaque to the interpreter~\cite{ousterhout1998scripting}. Unlike other programs, performance bottlenecks in glue scripts are typically in the external calls, rather than in script evaluation itself. 
While each individual external call is usually heavily optimized by its developers, their composition is not---in glue scripts there are often opportunities to execute independent external calls in parallel and stream results between them.

Traditionally, it is up to the script developer to exploit parallelism or streaming across calls, but doing so may result in complex multithreaded code, undermining the simplicity and usability of scripting languages.
At the same time, automatically exposing parallelization and streaming opportunities across external calls is challenging because they are often interwoven with complex control flow. Traditional parallelizing compilers~\cite{apostolakis2020perspective,johnson2012speculative} and evaluation strategies~\cite{blelloch95parallelism,blelloch99scheduling} focus on identifying dependencies in a single language without external calls. Recent work on parallelizing shell glue scripts with external calls~\cite{pash,pash-osdi-2022} can only expose limited parallelism across control-flow, focusing instead on parallelizing contiguous pipelines.

We propose a novel, general-purpose, higher-order scripting language designed to automatically execute external calls in a parallel and streaming way, while remaining consistent with the sequential evaluation semantics programmers expect.

A key assumption that we make is that external call dependencies can be fully determined by the call arguments, which holds for many categories of external calls, e.g., API calls and command utilities.
This assumption allows the language to evaluate two calls out of order and in parallel unless they are connected by an explicit data dependency. 
Our approach has three components: (1) a novel core calculus, \calcname{} that has external calls as first-class citizens, (2) a novel \emph{opportunistic} evaluation strategy that parallelizes independent external calls during the execution of the program,
and (3) extensions supporting partial data and control-flow, providing automatic streaming for real-world scripts.

Our evaluation strategy achieves significantly more parallelism than existing evaluation strategies. On one hand, strict evaluation executes all external calls sequentially, and on the other hand, lazy evaluation can automatically expose parallelism (e.g., Parallel Haskell~\cite{marlow2013parallel}) but might duplicate or drop effects.\footnote{As Simon Peyton Jones wrote of Haskell, ``[the] bottom line is that laziness and side effects are, from a practical point of view, incompatible.''~\cite{jones2001tackling}} While monads address this issue, they serialize \emph{all} effects, including random sampling,
diminishing the opportunities for parallelization.
In contrast, opportunistic evaluation guarantees that external calls are not dropped or duplicated. Our opportunistic semantics is presented as a set of reduction rules: when an external call is dispatched, it replaces the call with a hole, which is filled when the call completes, while \emph{internal} function calls (i.e., traditional calls to functions implemented in the calculus) are stepped in parallel. External and internal reduction steps are interleaved nondeterministically.
We prove that  this evaluation strategy is confluent, ensuring that opportunistic evaluation for terminating programs makes the same external calls as sequential, call-by-value evaluation, preserving the programmer's original intent.

On top of \calcname{}, we build \toolacronym{}, a scripting language with support for standard control flow constructs and automatic streaming. The key challenge with streaming is that our opportunistic semantics only fills a hole once an external call has completely finished running. We address this by allowing external calls to return \calcname{} terms with holes representing the values that remain to be computed. For instance, consider an external call that streams a list. As soon as the next element of the list has been computed, it can return a pair consisting of this element together with a hole to be filled as additional list elements become available. Internal computation on this list may be able to proceed opportunistically based just on the partially returned prefix. To make such interactions work smoothly, we use \textit{Church encodings} to represent \calcname{} datatypes, including lists; as we explain, this approach provides a natural notion of \emph{partial} value suited for streaming.

\add{We have implemented \toolname{} as a} Python framework that allows users to write \calcname{} programs using a subset of Python's syntax and make external calls to arbitrary Python functions, such as popular machine learning libraries. The programs are evaluated using our opportunistic semantics, providing automatic parallelization without any additional user effort.

To evaluate the efficacy of our approach, we focus on glue scripts that compose calls to remote APIs, including large language models (LLMs). LLM glue scripts are a particularly good domain for several reasons:
\begin{itemize}
\item \emph{High capacity for parallelization:} State-of-the-art LLMs are typically provided as remote services; providers handle the scheduling challenges of parallel execution, and their scale is so large that a single user effectively cannot exceed the provider's capacity for parallelization.
\item \emph{Long execution times:} LLMs can take 5-10 seconds to return all of their output, which makes the overhead of the scripting language negligible in comparison.
\item \emph{Streaming:} LLM calls often stream their outputs to reduce latency---e.g., they can produce their initial result in just a few milliseconds despite taking several seconds to complete.
\end{itemize}
However, despite numerous languages and frameworks proposed to help developers write these kinds of scripts~\cite{langchain,guidance,sglang,lmql,dspy}, none of them focus on automatic parallelization and streaming.

We evaluate \toolname{} on five such scripts, 
including scripts for constrained decoding~\cite{poesia2022synchromesh}, retrieval-augmented generation~\cite{schick2024toolformer}, tool calling~\cite{schick2024toolformer}, and tree-of-thoughts~\cite{tree-of-thoughts}. We evaluate on latency and total running time against both standard Python and bash implementations as well as state-of-the-art parallelization and streaming baselines (PaSh~\cite{pash}, Apache Flink~\cite{flink} and SGLang~\cite{sglang}). 
\toolname{} achieves significant speedups, improving latency by up to $12.7\times$ and total time by up to $6.2\times$ compared to sequential Python, while achieving comparable performance to manually optimized asynchronous Rust (up to $4.8\%$ latency and $18.5\%$ running time overhead).
For Tree-of-Thoughts~\cite{tree-of-thoughts} a prominent LLM reasoning approach, \toolname{} achieves a $6.2\times$ performance improvement over the authors' own Python implementation~\cite{tot-impl}.

In summary, our contributions are:
\begin{itemize}
\item The design of a core calculus \calcname{}, with first-class support for nondeterministic external calls
(Sections~\ref{sec:corecalculus:syntax} and~\ref{sec:corecalculus:semantics}).
\item A non-strict, non-lazy \emph{opportunistic} evaluation strategy for \calcname{}, achieving a high degree of parallelization \emph{automatically} while executing every possible external call and matching the behavior of strict evaluation for terminating programs (Section~\ref{sec:corecalculus:eval}).
\item A scripting language \toolname{}, built on top of \calcname{}, offering structured control flow and automatic parallelism and streaming (Section~\ref{sec:epiclang}).
\item An implementation of \toolname{} where programs can be written using a subset of Python syntax, and where arbitrary Python functions, including existing machine learning libraries, can be used as external calls (Section~\ref{sec:impl}).
\item An evaluation of \toolname{} on several real-world LLM scripting applications, comparing against both state-of-the-art parallelization and streaming systems and manually optimized asynchronous Rust (Section~\ref{sec:evaluation}).
\end{itemize}

\section{Motivating Example}
\label{sec:motivating}

We begin with a simple example to demonstrate the kinds
of LLM glue scripts that are often written by practitioners, usually in Python. 
We then show how the example contains a large amount of latent parallelism, and finally how our approach is able to automatically exploit that parallelism.

A common pattern in LLM programming is to make an LLM call, parse the output (e.g. into a list), and then make subsequent LLM calls based on the result. Adding structure in this way, rather than asking the LLM to solve a problem end-to-end, can increase both the interpretability and reliability of the results~\cite{tree-of-thoughts,schick2024toolformer}.
For example, suppose we have two external calls, implemented using LLMs: \kw{get_cities_in}, which takes a location name (as a string) and returns a list of city names in some order (e.g. personalized relevance); and \kw{get_excursions_in}, which takes a city name and returns text about excursions in that location. If we'd like to find tourist destinations in Oceania, we can first ask the LLM for locations in Oceania, and then ask for some fun excursions in each location:

\begin{inlinebox}
    \footnotesize
    \begin{minipage}[c]{0.36\textwidth}
\begin{minted}[escapeinside=||]{python}
for city in get_cities_in("Oceania"):
    print(&stdout, city)
    excursions = get_excursions_in(city)
    print(&stdout, excursions)
\end{minted}
    \end{minipage}
    \hfill
    $\Rightarrow$
    \hfill
    \begin{minipage}[c]{0.5\textwidth}
\begin{minted}[escapeinside=||]{python}
# print(&stdout, "Honolulu")
# print(&stdout, "Located in the US state of...")
# print(&stdout, "Jakarta")
# print(&stdout, "In Jakarta, you can find...")
# ...
\end{minted}
    \end{minipage}
\end{inlinebox}
Unfortunately, this program takes a long time to execute: each API call takes about 4 seconds to run, and the whole program takes about 51 seconds. However, the program exhibits a high degree of parallelism: the excursions for each city could be fetched in parallel. Further, since the LLM responses stream, we could begin processing cities before all of them are available. Taking advantage of these opportunities brings the total running time down to 9.2 seconds, a $5.5\times$ speedup.

Recent work has introduced several DSLs for different aspects of LLM programming, including LangChain~\cite{langchain}, LlamaIndex~\cite{llamaindex}, Guidance~\cite{guidance}, SGLang~\cite{sglang}, and DSPy~\cite{dspy}. However, none of them automatically parallelize this kind of loop, and some cannot handle control flow at all.

Many languages and parallelism frameworks support some kind of ``parallel map'' construct. In this case, we cannot just use parallel map, since that might re-order our \kw{print} statements, but we could refactor the code into a parallel map followed by a sequential loop:
\begin{inlinebox}
    \begin{minipage}[c]{0.53\textwidth}
\begin{minted}{python}
cities = get_cities_in("Oceania")
excursions = parallel_map(cities, get_excursions_in)
for city, excursion in zip(cities, excursions):
    print(&stdout, city)
    print(&stdout, excursion)
\end{minted}
    \end{minipage}
\end{inlinebox}

However, this requires programmers to make code changes and reason explicitly about the parallelism in their programs. Particularly with scripting languages, developers tend not to be expert programmers, and such changes may be prohibitive. For example, in the Tree-of-Thoughts~\cite{tree-of-thoughts} implementation~\cite{tot-impl}, no parallelization is used of any kind, despite there being several instances where parallel map could be used to improve performance by many times.

Further, the use of parallel map degrades latency (the time until the first program output) versus the original sequential example: parallel map returns when all results have been fetched, whereas the sequential example can at least print the first city's excursions as soon as they are available. The importance of latency to end-user experience is why LLM APIs provide streaming results. While other parallelization options exist with lower latency, e.g. manually creating processes and passing messages between them, writing such code is complex and error-prone.

Instead, we propose using a language that is parallel \emph{by default}. Practitioners write their high-level intent, ``as if'' execution were sequential; then, the runtime automatically executes the program as eagerly and with as much parallelism as possible. The semantics of our language is a rewrite-style operational semantics, and has three kinds of steps: \emph{reduction} steps, which are ordinary, internal computation (e.g. ordinary lambda calculus reduction); \emph{dispatch} steps, where an external call is initiated (e.g. by sending a network request) and the call site is a replaced by a ``task placeholder''; and \emph{resolution} steps, when the external call finishes (e.g. by receiving a network response) and the task placeholder is replaced by the result. Crucially, programs will continue executing, despite the presence of task placeholders.

\begin{figure}[t]
    \vskip 0pt
    \rule{\textwidth}{1pt}
    \hfill
      \begin{subfigure}[t]{0.33\textwidth}
          \captionsetup{singlelinecheck=false}
          \centering
          \footnotesize
          \begin{minted}{python}
for city in get_cs("Oc"):
    print(&stdout, city)
    exc = get_exc(city)
    print(&stdout, exc)
          \end{minted}
          \caption{Initial}
      \end{subfigure}\hfill
      \begin{subfigure}[t]{0.33\textwidth}
          \centering
          \captionsetup{singlelinecheck=false}
          \footnotesize
          \begin{minted}{python}
for city in TASK[A0]:
    print(&stdout, city)
    exc = get_exc(city)
    print(&stdout, exc)
          \end{minted}
          \caption{Dispatch: $\kw{get_cs} \Rightarrow \kw{A0}$}
      \end{subfigure}\hfill
      \begin{subfigure}[t]{0.33\textwidth}
          \centering
          \captionsetup{singlelinecheck=false}
          \footnotesize
          \begin{minted}{python}
for city in ["Ho"] + TASK[A1]:
    print(&stdout, city)
    exc = get_exc(city)
    print(&stdout, exc)
          \end{minted}
          \caption{Resolve: $\kw{A0} \Rightarrow (\kw{["Ho"] + A1})$}
      \end{subfigure}
      \\ \vspace{0.5em}\rule{\textwidth}{0.5pt}
      \begin{subfigure}[t]{0.33\textwidth}
          \centering
          \captionsetup{singlelinecheck=false}
          \footnotesize
          \begin{minted}{python}
print(&stdout, "Ho")
exc = get_exc("Ho")
print(&stdout, exc)
for city in TASK[A1]:
    print(&stdout, city)
    exc = get_exc(city)
    print(&stdout, exc)
          \end{minted}
          \caption{Reduce: unroll loop}
      \end{subfigure}\hfill
      \begin{subfigure}[t]{0.33\textwidth}
          \centering
          \captionsetup{singlelinecheck=false}
          \footnotesize
          \begin{minted}{python}
# print(&stdout, "Ho")
exc = get_exc("Ho")
print(&stdout, exc)
for city in TASK[A1]:
    print(&stdout, city)
    exc = get_exc(city)
    print(&stdout, exc)
          \end{minted}
          \caption{Dispatch: \kw{print}}
      \end{subfigure}\hfill
      \begin{subfigure}[t]{0.33\textwidth}
          \centering
          \captionsetup{singlelinecheck=false}
          \footnotesize
          \begin{minted}{python}
# print(&stdout, "Ho")
exc = TASK[B0]
print(&stdout, exc)
for city in TASK[A1]:
    print(&stdout, city)
    exc = get_exc(city)
    print(&stdout, exc)
          \end{minted}
          \caption{Dispatch: $\kw{get_exc} \Rightarrow \kw{B0}$}
      \end{subfigure}
      \\ \vspace{0.5em}\rule{\textwidth}{0.5pt}
      \begin{subfigure}[t]{0.33\textwidth}
          \centering
          \captionsetup{singlelinecheck=false}
          \footnotesize
          \begin{minted}{python}
# print(&stdout, "Ho")
exc = TASK[B0]
print(&stdout, exc)
for city in ["Ja"]:
    print(&stdout, city)
    exc = get_exc(city)
    print(&stdout, exc)
          \end{minted}
          \caption{Resolve: $\kw{A1} \Rightarrow \kw{["Ja"]}$}
      \end{subfigure}\hfill
      \begin{subfigure}[t]{0.33\textwidth}
          \centering
          \captionsetup{singlelinecheck=false}
          \footnotesize
          \begin{minted}{python}
# print(&stdout, "Ho")
exc = TASK[B0]
print(&stdout, exc)

print(&stdout, "Ja")
exc = get_exc("Ja")
print(&stdout, exc)
          \end{minted}
          \caption{Reduce: unroll loop}
      \end{subfigure}\hfill
      \begin{subfigure}[t]{0.33\textwidth}
          \centering
          \captionsetup{singlelinecheck=false}
          \footnotesize
          \begin{minted}{python}
# print(&stdout, "Ho")
exc = TASK[B0]
print(&stdout, exc)

print(&stdout, "Ja")
exc = TASK[C0]
print(&stdout, exc)
          \end{minted}
          \caption{Dispatch: $\kw{get_exc} \Rightarrow \kw{C0}$}
      \end{subfigure}\hfill
      \\ \vspace{0.5em}\rule{\textwidth}{0.5pt}
      \begin{subfigure}[t]{0.33\textwidth}
          \centering
          \captionsetup{singlelinecheck=false}
          \footnotesize
          \begin{minted}{python}
# print(&stdout, "Ho")
exc = TASK[B0]
print(&stdout, exc)

print(&stdout, "Ja")

print(&stdout, "...")
          \end{minted}
          \caption{Resolve: $\kw{C0} \Rightarrow \kw{"..."}$}
      \end{subfigure}\hfill
      \begin{subfigure}[t]{0.33\textwidth}
          \centering
          \captionsetup{singlelinecheck=false}
          \footnotesize
          \begin{minted}{python}
# print(&stdout, "Ho")

print(&stdout, "...")

print(&stdout, "Ja")

print(&stdout, "...")
          \end{minted}
          \caption{Resolve: $\kw{B0} \Rightarrow \kw{"..."}$}
      \end{subfigure}\hfill
      \begin{subfigure}[t]{0.33\textwidth}
          \centering
          \captionsetup{singlelinecheck=false}
          \footnotesize
          \begin{minted}{python}
# print(&stdout, "Ho")

# print(&stdout, "...")

# print(&stdout, "Ja")

# print(&stdout, "...")
          \end{minted}
          \caption{Dispatch: \kw{print} ($\times 3$)}
      \end{subfigure}\hfill
      \\ \vspace{0.5em}\rule{\textwidth}{1pt}
      \caption{A conceptual demonstration of how our approach achieves automatic parallelization and streaming for the motivating example program. Our approach is based on a rewrite-syle operational semantics, involving three kinds of steps: \emph{reduction} steps are ordinary rewrites, internal to the term, as in lambda calculus; \emph{dispatch} steps occur when an external call is made, replacing the call with a placeholder \kw{TASK} (function $\Rightarrow$ task); \emph{resolution} steps occur when an external call returns and the TASK is replaced by an expression (task $\Rightarrow$ expression). The captions indicate what step led to that term. The initial term is shown in (a) and the final term in (l). Prints that have been performed are shown with a comment (\#).}
      \label{fig:motivating-example}
  \end{figure}

As an example of opportunistic evaluation, consider in Figure~\ref{fig:motivating-example} our example from earlier. (For reasons of space, strings and function names have been shortened, and the LLM only returns two cities.) The initial term is shown in (a). First, because the argument to \kw{get_cs} is a value, the external call is dispatched, sending the network request to the LLM provider, and the call syntax is replaced with a \kw{TASK[A0]} placeholder, to get (b). The computation is now blocked until some result is received. When a city name has been streamed from the LLM, the \kw{TASK[A0]} resolves to \kw{["Ho"] + TASK[A1]}, to get (c). Now, since the first list element is known, the for loop can be partially unrolled via reduction, to get (d). Now, since the first \kw{print} statement has a concrete argument, the runtime can dispatch it, printing to the user, to get (e). No \kw{TASK} is produced, since \kw{print} does not return anything. Since the argument to the first \kw{get_exc} is known, it is dispatched, sending another network request to the LLM provider and replacing the call with \kw{TASK[B0]}, to get (f). The computation is again blocked until some result is received, either for task \kw{A1} or \kw{B0}. Suppose \kw{A1} returns first, resolving to \kw{["Ja"]}, to get (g). Now, since the entire list being looped over is known, the loop can be unrolled entirely via reduction, to get (h). Now, even though \kw{"Ja"} is a known argument to \kw{print}, it cannot be dispatched yet, as it must wait for the preceding \kw{print} call to execute. The call to \kw{get_exc}, however, can be dispatched and replaced with \kw{TASK[C0]}, to get (i). The computation is again blocked until either \kw{B0} or \kw{C0} resolves. Suppose \kw{C0} resolves first, to \kw{"..."}, to get (j). Now there are two \kw{print} calls that have arguments, but are blocked by the preceding \kw{print} call. Finally, suppose \kw{B0} resolves to \kw{"..."}, to get (k). Now all three \kw{print} calls can be dispatched in rapid succession, to get (l).

In this example, we were able to both dispatch a \kw{get_exc} call while \kw{get_cs} was only partially completed, and we were able to have both \kw{get_exc} calls run in parallel, in contrast to standard sequential execution. Because LLM scripts have many time-consuming but parallelizable external calls, this sort of streaming and parallel dispatch of external calls can dramatically improve the runtime. On the motivating example (where \kw{get_cs} actually returns about ten cities, rather than the two in the example) sequential evaluation has a total runtime of 51 seconds and a latency (time until first print) of 7.4 seconds. Opportunistic evaluation brings these down to 9.2 and 1.3 seconds, for $5.5\times$ and $5.6\times$ speedups,  respectively.

\section{\calcname{} Calculus}
\label{sec:corecalculus}

At the center of our approach is a core calculus, \calcname{}, which enables out-of-order, ``opportunistic'' evaluation of external calls. This section presents \calcname{}, starting from a high-level overview (Section~\ref{sec:corecalculus:overview}), before formally presenting the syntax (Section~\ref{sec:corecalculus:syntax}), semantics (Section~\ref{sec:corecalculus:semantics}), and evaluation strategy (Section~\ref{sec:corecalculus:eval}).

\subsection{Overview}
\label{sec:corecalculus:overview}

\calcname{} is based on lambda calculus with tuples, but differs from traditional calculi in three ways:
(1) an explicit treatment, in both the syntax and semantics, of external calls; 
(2) the syntactic use of let-bindings rather than nested expressions, similar to ANF; 
and (3) a novel, non-strict, non-lazy ``opportunistic'' evaluation strategy.

\paragraph{External Calls}
One of the key design choices is the explicit treatment of external calls. We assume the existence of a set of external values, which we call \emph{primitives} (in our implementation, these are just Python values). Syntactically, there are two additions beyond lambda calculus: primitives and external \emph{tasks}. An \emph{external call} is a function call where the function is a primitive, rather than an (internal) function definition. (For simplicity, we also assume that the arguments to an external call are primitive.) Semantically, there are two additions. First, external calls step to tasks via the \emph{dispatch} semantics. Dispatch is when the runtime begins executing the external operation, e.g. by sending a network request. The task syntax serves as a placeholder for the result of the running operation. Second, when the operation has finished, the task placeholder is replaced by the result via the \emph{resolution} semantics. The dispatch and resolution semantics are defined in Figure~\ref{fig:semantics:rules} and detailed in Section~\ref{sec:corecalculus:semantics}.
Because external calls are, by definition, opaque to the interpreter, the semantics assumes the existence of an \emph{external semantics} relation that maps an external function and external arguments to a set of results.

We show that \calcname{} is confluent, even in the presence of nondeterministic external calls. 
In particular, for any program, if the result of each external call (e.g. coin flip) is fixed between runs, then the order of evaluation does not affect the final result. Thus, while a program may have a nondeterministic result, the nondeterminism is due only to external calls, not due to internal evaluation order.
To guarantee confluence, we assume that the behavior of an external call depends only on its arguments, and thus that external calls with no data dependency may be executed in arbitrary order. See Section~\ref{sec:epiclang:sequencing} for a discussion of enforcing sequencing.

\paragraph{Statement Identifiers and ANF}
Proving confluence relies on the ability to identify a specific external call and what it resolved to. Often, external calls will not exist in the initial term, but rather will be dynamically generated as evaluation proceeds. Thus we must have a unique way of identifying each external call that might be generated.

To accomplish this, the syntax of our language is in ANF~\cite{anf} style---rather than having nested expressions, expressions are a sequence of let-bindings, followed by a return variable. Each statement binds a variable to some operation: defining a function, calling a function, constructing a tuple, or projecting out of a tuple (accommodating lambda calculus with tuples); or to an external primitive or task (accommodating external calls). See Section~\ref{sec:corecalculus:syntax} for details and Figure~\ref{fig:syntax:example}~(a) for the grammar.

Thus, each statement in an initial expression can be identified with a number, indicating its position in the list of statements. 
Execution proceeds by replacing a statement (e.g. a function call) with a list of statements (e.g. the body of the function being called). Such dynamically-generated statements can be identified by a \emph{list} of numbers. For example, for a function call with identifier $(3)$, the $n$ statements generated by stepping it would have identifiers $(3,1), (3,2), \ldots, (3,n)$. Importantly, these statement identifiers are consistent regardless of evaluation order. Moreover, they are needed only for \emph{proving} confluence; such metadata need not be explicitly maintained by an implementation. See Section~\ref{sec:corecalculus:semantics} for the definition and discussion of statement identifiers.

\paragraph{Opportunistic Evaluation}
Since confluence ensures that it is sound to step \calcname{} terms in any order, we need an evaluation strategy that takes advantage of this to produce parallel execution.
Strict (call-by-value) evaluation would unnecessarily block on external calls or diverging computation, while lazy (call-by-need) evaluation might drop an external call (i.e. never invoke a particular external call statement). Instead, our evaluation strategy, which we call \emph{opportunistic}, simultaneously
takes all steps that are available to the semantics. For terminating programs, it makes exactly the same external calls as strict evaluation. However, like lazy evaluation, it is not unnecessarily blocked by external tasks or diverging computations. We show that opportunistic evaluation dispatches every external call that is reachable via the semantics, which is strictly more than either strict or lazy evaluation. See Section~\ref{sec:corecalculus:eval} for details and Figure~\ref{fig:semantics:rules} for the semantics rules.

\subsection{Syntax}
\label{sec:corecalculus:syntax}

\paragraph{Grammar} The syntax, shown in Figure~\ref{fig:syntax:example}~(a),
consists of expressions ($e$), statements ($s$), and operations ($o$). The grammar is parameterized by a set of primitives ($c$), which in our implementation are Python objects. A primitive expression ($\bar{c}$) is either a primitive or a tuple of primitive expressions. Variables ($x, f$) are identifiers. An expression is either a statement followed by an expression $(s \ssep e)$ or a return variable $(\sret{x})$. We elide \kw{ret} when unambiguous and often use line breaks instead of semicolons for readability. Note that an expression is equivalent to a list of statements $\vec{s}$ followed by a return variable---we will sometimes write expressions this way, as $(\vec{s} \ssep{} x)$. The empty list of statements is written $\varnothing$. Statements bind variables to operations $(\sstmt{x}{o})$. An operation is either a function definition $(\efun{x}{e})$, with parameter $x$ and body $e$, a call $(\eapp{f}{x})$ of function $f$ with argument $x$, a tuple $(\etuple{x_1, \ldots, x_n})$ with components $x_i$, a projection $(\eproj{i}{x})$ of component $i$ from a tuple $x$, a primitive $(\eprim{c})$, or a task $(\etask{c_f}{\bar{c}_x})$ running primitive function $c_f$ with argument $\bar{c}_{x}$. We say that a statement $s$ is \emph{top-level} in an expression $e$ if it is not inside of a function definition. See Figure~\ref{fig:syntax:example}~(b) for an example term.

\begin{figure}
    \begin{subfigure}{0.95\textwidth}
        \centering
        \begin{align*}
            e \Coloneqq\ s \ssep{} e \sep{} \sret{x} \qquad\qquad
            s \Coloneqq\ \sstmt{x}{o} \qquad\qquad
            \bar{c} \Coloneqq\ c \sep{} (\bar{c}_1, \ldots, \bar{c}_n)
        \end{align*}
        \begin{align*}
            o \Coloneqq&\ x
            \sep{} \efun{x}{e}
            \sep{} \eapp{f}{x}
            \sep{} \etuple{x_1, \ldots, x_n}
            \sep{} \eproj{i}{x}
            \sep{} \eprim{c}
            \sep{} \etask{c_f}{\bar{c}_x}
        \end{align*}
        \caption{}
    \end{subfigure} \\
    \begin{subfigure}{0.45\textwidth}
        \centering
        \footnotesize
        \begin{verbatim}
x0 := <"Oceania">
cities := get_cities_in x0
for_body := fun x1:
    stdout1 := prj 1 x1
    city := prj 2 x1
    excursion := get_excursions_in city
    x2 := (stdout1, excursion)
    stdout2 := print x2
    ret stdout2
x3 := (cities, stdout, for_body)
stdout3 := fold x3
x4 := ()
ret x4
        \end{verbatim}
        \caption{}
    \end{subfigure}\hfill
    \begin{subfigure}{0.45\textwidth}
        \centering
        \footnotesize
        \begin{verbatim}

cities := get_cities_in <"Oceania">
for_body := fun (stdout, city):


    excursion := get_excursions_in (city)

    stdout := print (stdout, excursion)
    stdout

stdout := fold (cities, stdout, for_body)

()
        \end{verbatim}
        \caption{}
    \end{subfigure}
    \caption{The grammar of \calcname{} (a). The motivating example from Section~\ref{sec:motivating}, in the formal grammar of \calcname{} (b) and in sugared form (c). The example is well-formed in contexts containing \kw{cities_in}, \kw{excursion_in}, \kw{fold}, \kw{print}, and \kw{stdout}.}
    \label{fig:syntax:example}
\end{figure}

\paragraph{Well-Formedness}
An expression $e$ is \emph{well-formed in a context $\Gamma$} (written $\Gamma \vdash e$) in the standard sense: variables that are used in an expression must be in the context, with binding occurrences $\efun{x}{e}$ and $\sstmt{x}{o} \ssep{} e$ adding $x$ to the context. For simplicity in the semantics, well-formedness also prohibits re-binding of variable names. See Appendix~\ref{appendix:wf}
for the formal definition. The example in Figure~\ref{fig:syntax:example} is well-formed in contexts containing \kw{get_cities_in}, \kw{get_excursions_in}, \kw{fold}, \kwtt{print}, and \kw{stdout}. We say \emph{$e$ is well-formed} if it is well-formed in some $\Gamma$.

\paragraph{Syntactic Sugar}
Informally we adopt several conventions to aid in readability. As mentioned, we elide \kw{ret} when unambiguous and may use line breaks instead of semicolons for readability. Variables that are re-bound (e.g. $\sstmt{x}{o} \ssep{} \sstmt{x}{o'}$), desugar to unique variable names (e.g. $\sstmt{x}{o} \ssep{} \sstmt{x'}{o'}$). Tuples in variable binding sites (e.g. $\sstmt{\etuple{y, z}}{x}$) desugar to projections (e.g. $\sstmt{y}{\eproj{1}{x}} \kw{;}$ $\sstmt{z}{\eproj{2}{x}}$). Operations in variable positions (e.g. $\stuple{z}{\etuple{w,x}, y}$) desugar to definitions of intermediate variables (e.g. $\stuple{u}{w,x} \ssep{} \stuple{z}{u, y}$). See Figure~\ref{fig:syntax:example}~(b) and~(c) for an example of desugaring.

\paragraph{Syntactic Operations}
We write $\freevars{e}$ to mean the free variables of $e$, i.e. those that occur but are not bound. We write $\subst{e}{x}{y}$ to mean the result of replacing all occurrences of a free variable $y$ in $e$ with $x$. We write $\freshen{E}{e}$ to mean renaming the top-level bound variables of $e$ such that they do not occur in $e$ or an evaluation context $E$ (defined in Section~\ref{sec:corecalculus:semantics}). We write $\freshen{E}{\efun{x}{e}}$ to mean that we additionally rename $x$. We write $s \in e$ to mean that $s \in \vec{s}$ for $e = \mquot{\vec{s} \ssep{} x}$.

\subsection{Semantics}
\label{sec:corecalculus:semantics}
We give \calcname{} a rewrite-style operational semantics, which consists of three parts: \emph{reduction} steps, \emph{dispatch} steps, and \emph{resolution} steps. Reduction steps capture the standard semantics of lambda calculus with tuples, adapted for the ANF style of the syntax: stepping function definition/call pairs and tuple construction/projection pairs. Dispatch and resolution are novel, handling the initiation and termination of external calls.

The semantics is essentially a relation $\stepfull{e}{}{e'}$, stepping an expression $e$ to $e'$. However, for the purposes of proving properties, we slightly enrich expressions $e$ to ``labeled expressions'' $p$, and the semantics steps $\stepfull{p}{}{p'}$.
Each rule in the semantics matches a single top-level statement in $p$ and replaces it with zero or more new statements to produce $p'$.

\paragraph{Statement Labels and Labeled Expressions}
For the purposes of stating and proving certain properties, it is useful to be able to uniquely identify statements in an expression, including those generated during execution.
Thus, the semantics operates on \emph{labeled expressions} $p$, which are expressions, but where the top-level statements are tagged with \emph{statement labels} $\ell$, nonempty sequences of natural numbers (see Figure~\ref{fig:semantics:ctxgrammar}).

We can erase the labels to get an expression: $\erasestate{\ell_1 : s_1 \ssep{} \ldots \ssep{} \ell_n : s_n \ssep x} = \mquot{s_1 \ssep{} \ldots \ssep{} s_n \ssep x}$. We implicitly erase labels when using a $p$ as an $e$. We can also initialize an expression $e$ with sequential, singleton labels:
$\initstate{}{s_1 \ssep{} \ldots \ssep{} s_n \ssep{} x} = (1 : s_1 \ssep{} \ldots \ssep{} n : s_n \ssep{} x)$.

In order for these labels to be useful, they must be unique in an expression. However, uniqueness is not quite strong enough of a property to be stable under the semantics: we will see later that if we had a labeled expression $(1: s_1 \ssep{} 1,1: s_2)$, stepping $s_1$ might produce a new statement $s_3$ with label $(1,1)$, resulting in $(1,1: s_3 \ssep{} 1,1: s_2)$ and causing $(1,1)$ not to be a unique label. Thus we introduce a stronger notion of \emph{label-independence}, which requires that, in addition to being unique, no label is a prefix of another label. Thus $(1: s_1 \ssep{} 1,1: s_2)$ would be prohibited, but $(1: s_1 \ssep{} 2: s_2)$ or $(1, 1: s_1 \ssep{} 1,2: s_2)$ would not. A labeled expression is \emph{well-formed} if it is label-independent and it is well-formed as an (unlabeled) expression. Unless otherwise specified, we will assume that labeled expressions are well-formed.

\begin{figure}
    \begin{align*}
        \ell \Coloneqq i \sep{} \ell, i \qquad\qquad
        p \Coloneqq \ell : s \ssep{} p \sep{} x \qquad\qquad
        E \Coloneqq \ell : s \ssep{} E \sep{} \ell : [] \ssep{} p
    \end{align*}
    \caption{The grammars for statement labels $\ell$, labeled expressions $p$, and evaluation contexts $E$. $i$ ranges over natural numbers.}
    \label{fig:semantics:ctxgrammar}
\end{figure}

\paragraph{Evaluation Contexts}
An \emph{evaluation context} $E$ is a labeled expression, but with one top-level statement replaced by a \emph{hole}, ``$[]$'' (defined in Figure~\ref{fig:semantics:ctxgrammar}). We write $p = E\decomp{s}$ to mean that replacing the hole in $E$ with $s$ yields $p$. Crucially, we can also replace the hole in $E$ with a list of statements $\vec{s}$, written $E\expand{\vec{s}}$, similar to a ``flat map'' operation. If the hole has statement label $\ell$, then the $i$-th new statement has label $(\ell, i)$---i.e. label $\ell$ is replaced by $n$ statements with labels $(\ell, 1)$ through $(\ell, n)$. See Figure~\ref{fig:semantics:ctxsubst} for the formal definitions of these replacement operations. Let $\ctxident{E}$ give the statement label of the hole in $E$ (i.e. $\ctxident{\ell : [] \ssep{} p} = \ell$). We define $\freevars{E'}$ and $s \in E$ based on their versions for expressions $e$, replacing the hole with the empty list of statements and erasing statement labels.

\begin{figure}[b]
    \centering
    \small
    \begin{subfigure}{0.45\textwidth}
        \centering
        \begin{align*}
            \mquot{\ell : s' \ssep{} E}\decomp{s} \coloneqq& \ell : s' \ssep{} E\decomp{s} \\
            \mquot{\ell : [] \ssep{} p}\decomp{s} \coloneqq& \ell : s \ssep{} p \\
            \phantom{append()} \\
            \phantom{append()}
        \end{align*}
        \caption{}
    \end{subfigure}
    \begin{subfigure}{0.45\textwidth}
        \centering
        \begin{align*}
            \mquot{\ell : s' \ssep{} E}\expand{\vec{s}} \coloneqq& \ell : s' \ssep{} E\expand{\vec{s}} \\
            \mquot{\ell : [] \ssep{} p}\expand{\vec{s}} \coloneqq& \operatorname{append}(\ell, 1, \vec{s}, p) \\
            \operatorname{append}(\ell, i, (s \ssep{} \vec{s}), p) \coloneqq& \ell, i : s \ssep{} \operatorname{append}(\ell, i+1, \vec{s}, p) \\
            \operatorname{append}(\ell, i, \varnothing, p) \coloneqq& p
        \end{align*}
        \caption{}
    \end{subfigure}
    \caption{The operations of replacing the hole in an evaluation context $E$ with a single statement $E\decomp{\cdot}$ (a), and with a list of statements $E\expand{\cdot}$ (b).}
    \label{fig:semantics:ctxsubst}
\end{figure}

\paragraph{Semantics}
The semantics is defined in Figure~\ref{fig:semantics:rules}. The following paragraphs will discuss each rule in detail, but observe that every rule in the semantics replaces a statement $s$ with a list of statements $\vec{s}$. For the purposes of proving properties, it is helpful to define this as a function $\sreplace{p}{\ell}{s}{\vec{s}}$, which replaces the statement $s$ at $\ell$ in $p$ with the statements $\vec{s}$:
\begin{align*}
    \sreplace{p}{\ell}{s}{\vec{s}} \coloneq E\expand{\vec{s}} \qquad \text{if} \qquad \ctxident{E} = \ell \land p = E\decomp{s}
\end{align*}
Note that ``$\operatorname{replace}$'' preserves label-independence---if $E\decomp{s}$ is label-independent, then so is $E\expand{\vec{s}}$.
We will say a step occurred \emph{at $\ell$}, written $\stepat{p}{}{\ell}{p'}$, if
$p' = \sreplace{p}{\ell}{s}{\vec{s}}$ or $p' = \sreplace{p}{\ell}{s}{\vec{s}}\vsubst{x}{y}$
for some $s$ and $\vec{s}$. We will call $\vec{s}$ the \emph{new statements} and $s$ the \emph{old statement}.

\begin{figure}
    \footnotesize
    \begin{subfigure}{1\textwidth}
        \centering
        \begin{mathpar}
            \inferrule{
            }{\stepreduce{
                E\decomp{\sstmt{y}{x}}
            }{}{
                E\expand{\varnothing}\vsubst{x}{y}
            }} \text{reduce-alias} \and
            \inferrule{
                \mquot{\sfun{f}{x}{\vec{s} \ssep{} y}} \in E \\
                \mquot{\efun{x'}{\vec{s}' \ssep{} y'}} = \freshen{E}{\efun{x}{\vec{s} \ssep{} y}} \\
            }{\stepreduce{
                E\decomp{\sapp{z}{f}{w}}
            }{}{
                E\expand{\sstmt{x'}{w} \ssep{} \vec{s}' \ssep{} \sstmt{z}{y'}}
            }} \text{reduce-call} \and
            \inferrule{
                \mquot{\stuple{x}{x_1, \ldots, x_n}} \in E \\
            }{\stepreduce{
                E\decomp{\sproj{y}{i}{x}}
            }{}{
                E\expand{\sstmt{y}{x_i}}
            }
            } \text{reduce-proj} \and
            \inferrule{
                o \in \{\mquot{\kw{fun} \ldots}, \mquot{\etuple{\ldots}}, \mquot{\eprim{\ldots}}\} \\
                x \notin \freevars{E} \\
            }{\stepreduce{
                E\decomp{\sstmt{x}{o}}
            }{}{
                E\expand{\varnothing}
            }} \text{reduce-gc}
        \end{mathpar}
        \caption{Reduction relation $\stepreduce{p}{}{p'}$, stepping (internally) from $p$ to $p'$.}
        \label{fig:semantics:rules:reduction}
    \end{subfigure} \\ \vspace{1em}
    \begin{subfigure}{1\textwidth}
        \centering
        \begin{mathpar}
            \inferrule{
                \mquot{\sprim{f}{c_f}} \in E\\
                \obsprim{E}{x}{\bar{c}_x} \\
            }{\stepdispatch{
                E\decomp{\sapp{y}{f}{x}}
            }{}{
                E\expand{\stask{y}{c_f}{\bar{c}_x}}
            }} \text{dispatch} \\
            \inferrule{
                \mquot{\sprim{x}{c}} \in E
            }{\obsprim{
                E
            }{x}{
                c
            }}
            \text{pexp-value} \and
            \inferrule
            {
                \mquot{\stuple{x}{x_1, \ldots, x_n}} \in E \\
                \obsprim{E}{x_1}{\bar{c}_1} \\
                \ldots \\
                \obsprim{E}{x_n}{\bar{c}_n} \\
            }
            {\obsprim{
                E
            }{x}{
                (\bar{c}_1, \ldots, \bar{c}_n)
            }}
            \text{pexp-tuple} \and
        \end{mathpar}
        \caption{Dispatch relation $\stepdispatch{p}{}{p'}$, stepping from $p$ to $p'$ by dispatching an external call. The auxiliary ``pexp'' relation $\obsprim{E}{x}{\bar{c}}$ determines whether $x$ corresponds to primitive expression $\bar{c}$ in $E$.}
        \label{fig:semantics:rules:dispatch}
    \end{subfigure} \\ \vspace{1em}
    \begin{subfigure}{1\textwidth}
        \centering
        \begin{mathpar}
            \inferrule{
                \mquot{\vec{s} \ssep{} y} \in \mathcal{C}(c_f, \bar{c}_x) \\
                \mquot{\vec{s}' \ssep{} y'} = \freshen{E}{\vec{s} \ssep{} y} \\
            }{\stepresolve{
                E\decomp{\stask{z}{c_f}{\bar{c}_x}}
            }{}{
                E\expand{\vec{s}' \ssep{} \sstmt{z}{y'}}
            }} \text{resolve} \and
        \end{mathpar}
        \caption{Resolution relation $\stepresolve{p}{}{p'}$, stepping from $p$ to $p'$ by resolving a task to an expression.}
        \label{fig:semantics:rules:resolve}
    \end{subfigure}
    \caption{The operational semantics of \calcname{}, $\stepfull{p}{}{p'}$, stepping from $p$ to $p'$, is defined as the union of $\stepreduce{}{}{}$ (a), $\stepdispatch{}{}{}$ (b), and $\stepresolve{}{}{}$ (c). We write $\stepat{p}{}{\ell}{p'}$ if $p = E\decomp{s}$, $p' = E\expand{\vec{s}}$, and $\ell = \ctxident{E}$, i.e. $(\ell : []) \in E$.}
    \label{fig:semantics:rules}
\end{figure}

\paragraph{Reduction}
Defined in Figure~\ref{fig:semantics:rules}~(a), reduction has four rules. ``reduce-alias'' applies when there is a statement $(\sstmt{y}{x})$ somewhere in the initial expression, and it steps by deleting that statement and replacing all occurrences of $y$ in the rest of the expression with $x$. ``reduce-proj'' applies when there is a statement $(\sproj{y}{i}{x})$, and $x$ is itself defined as a tuple by a statement $(\stuple{x}{x_1, \ldots, x_n})$, and it steps by replacing the projection operation $(\eproj{i}{x})$ with the $i$-th component $x_i$. ``reduce-gc'' applies when there is a statement defining $x$ to be a function, tuple, or primitive, but where $x$ is not referenced elsewhere, and the rule steps by deleting that statement.

The bulk of the work of reduction is done by ``step-call'', an example of which is shown in Figure~\ref{fig:semantics-example}. The rule applies when the initial term has a call statement $(\sapp{z}{f}{w})$ and $f$ is elsewhere defined as a function by a statement $(\sfun{f}{x}{e_f})$, as in Figure~\ref{fig:semantics-example}~(a). First, the function definition $(\efun{x}{e_f})$ is freshened with respect to the evaluation context $E$, in (b). We call the fresh parameter $x'$, the fresh statement list $\vec{s}'$, and the fresh return variable $y'$ (in the example, $\kw{x0}$, $(\sapp{\kw{y0}}{\kw{f}}{\kw{x0}})$, and $\kw{y0}$ respectively). Finally, the call statement is replaced by a statement assigning the fresh parameter $x'$ to the argument $w$, followed by the fresh statement list, followed by a statement assigning the call variable $z$ to the fresh return variable $y'$, in (c). Freshening ensures that variable names remain unique and the hole-filling operation preserves label independence.

\begin{figure}
    \footnotesize
    \hfill
    \begin{subfigure}[t]{0.2\textwidth}
        \centering
        \begin{minted}{python}
1: g := fun x:
       y := f x
       y
        
2: z1 := g z0
3: z2 := g z1



   z2
        \end{minted}
        \caption{Before, $p$}
        \label{fig:semantics-example:a}
    \end{subfigure}\hfill
    \begin{subfigure}[t]{0.2\textwidth}
        \centering
        \begin{minted}[escapeinside=``]{python}
`\ `





fun `\textbf{x0}`:
  `\textbf{y0}` := f `\textbf{x0}`
  `\textbf{y0}`
`\ `
        \end{minted}
        \caption{Fresh copy of the definition of $\kw{g}$}
        \label{fig:semantics-example:b}
    \end{subfigure}\hfill
    \begin{subfigure}[t]{0.2\textwidth}
        \centering
        \begin{minted}[escapeinside=``]{python}
1:   g := fun x:
         y := f x
         y
  
2:   z1 := g z0
`\sout{3:   z2 := g z1}`
`\textbf{3,1: x0 := z1}`
`\textbf{3,2: y0 := f x0}`
`\textbf{3,3: z2 := y0}`
     z2
        \end{minted}
        \caption{Statement replacement at label $3$}
        \label{fig:semantics-example:c}
    \end{subfigure}\hfill
    \begin{subfigure}[t]{0.2\textwidth}
        \centering
        \begin{minted}[escapeinside=``]{python}
1:   g := fun x:
         y := f x
         y
  
2:   z1 := g z0

3,1: x0 := z1
3,2: y0 := f x0
3,3: z2 := y0
     z2
        \end{minted}
        \caption{After, $p'$}
        \label{fig:semantics-example:c}
    \end{subfigure}
    \caption{An illustration of the ``reduce-call'' rule for $\stepat{p}{}{3}{p'}$, stepping from $p$ to $p'$ at label $3$.}
    \label{fig:semantics-example}
\end{figure}

\paragraph{Dispatch}
Defined in Figure~\ref{fig:semantics:rules}~(b), dispatch has one rule, as well as an auxiliary ``pexp'' relation. External functions are assumed to accept only on primitive values, not \calcname{} syntax, as arguments. Because they may expect more than one argument, we have a notion of a primitive expression $\bar{c}$ (recall that this is either a primitive $c$ or a tuple of primitive expressions). However, we need to be able to convert ordinary \calcname{} tuples into primitive expressions. This is accomplished by $\obsprim{E}{x}{\bar{c}}$, which says that in a context $E$, variable $x$ corresponds to $\bar{c}$. If $x$ is defined to be a primitive $c$ (``pexp-value''), then it corresponds to the primitive expression containing just $c$. If instead $x$ is defined to be a tuple of $x_i$, (``pexp-tuple'') then it corresponds to the primitive expression that is a tuple, where the $i$-th component is the primitive expression corresponding to $x_i$. The dispatch semantics, ``dispatch'' applies when there is a call $(\sapp{y}{f}{x})$ where $f$ is bound to a primitive $\eprim{c_f}$ and $x$ corresponds to a primitive expression $\bar{c}_x$, and replaces the call statement with a task statement $(\stask{y}{c_f}{\bar{c}_x})$.

For example, in an evaluation context containing
\begin{center}
    \kw{f := <cf>; w := <c1>; x := <c2>; y := (w, x)}
\end{center}
the statement \kw{z := f y} steps via dispatch to \kw{z := <<cf (c1, c2)>>}. It is at this step that an actual runtime would begin executing $\kw{cf (c1, c2)}$, e.g. by making a network request.

\paragraph{Resolution}
Because tasks model opaque external calls, their semantics must be provided to the runtime. We model this with an \emph{external semantics} relation $\mathcal{C}(c_f, \bar{c}_x) \subseteq S_e$, where $S_e$ is the set of expressions that are well-formed in the empty context. $c_f$ is the primitive function that was called, and $\bar{c}_x$ is the primitive (or tuple thereof) passed as the call argument. Crucially, the external semantics yields a \emph{set} of expressions, allowing external calls to be nondeterministic, but this set is determined only by the function called and the argument passed, encoding our core assumption that the behavior of external calls is determined by their argument. For deterministic external calls, such as primitive arithmetic, the set will be a singleton, e.g. $\mathcal{C}(\kw{+}, \kw{(3,4)}) = \{\kw{<7>}\}$. However, a coin flip can be modeled as a nondeterministic boolean, returning $\eprim{\kw{f}}$ or $\eprim{\kw{t}}$, i.e. $\mathcal{C}(\kw{coin}, \kw{()}) = \{\eprim{\kw{f}},\eprim{\kw{t}}\}$. Further, external calls need not just return primitives, and can affect control-flow by returning \calcname{} functions, which will be utilized in Section~\ref{sec:epiclang:data}.

Defined in Figure~\ref{fig:semantics:rules}~(c), resolution has one rule. It applies when there is a task statement $(\stask{z}{c_f}{\bar{c}_x})$, and it steps by nondeterministically choosing an expression $\vec{s}\ssep{} y$ from the external semantics for the task $\mathcal{C}(c_f, \bar{c}_x)$, freshening it to the expression $\vec{s}' \ssep{} y'$, and then replacing the statement with the fresh statements $\vec{s}'$ and a statement $\sstmt{z}{y'}$ assigning the variable $z$ that was bound to the task to the fresh return variable $y'$.

For example, \kw{z := <<cf (c1, c2)>>} might step to \kw{y := <c3>; z := y}. An actual runtime implementation would wait to take this step until the external task had finished and yielded $\vec{s} \ssep{} y$, e.g. by receiving a network response.

\paragraph{Deterministic Semantics}
The semantics $\stepfull{}{}{}$ are not confluent \emph{per se}, due to nondeterminism in the ``resolve'' rule: e.g.
$\etask{\kw{coin}}{()}$ could step to $\eprim{\kw{f}}$ or $\eprim{\kw{t}}$, which have no common reduct. Thus, we define a \emph{deterministic resolution} relation $\stepresolve{p}{T}{p'}$ (Figure~\ref{fig:semantics:rules:resolvedet}) and a \emph{deterministic semantics} $\stepfull{p}{T}{p'}$ (the union of reduction, dispatch, and \emph{deterministic} resolution). Here, $T$ is an \emph{external interaction trace}, a partial function from statement labels $\ell$ to expressions $e$. Instead of picking the call result nondeterministically like ``resolve'', ``resolvedet'' uses the identifier $\ell$ of the task statement being stepped to look up the result specified in $T$.

\begin{proposition}
    $p_0 \Rightarrow \ldots \Rightarrow p_n$ iff there exists $T$ such that $p_0 \Rightarrow_T \ldots \Rightarrow_T p_n$.
\end{proposition}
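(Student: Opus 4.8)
The plan is to prove the two directions separately, using that the two semantics agree on reduction and dispatch and differ only in how the result of a resolution step is selected. For the backward direction, suppose $p_0 \Rightarrow_T \cdots \Rightarrow_T p_n$; I would argue step by step that each deterministic step is also a nondeterministic step. Reduction and dispatch are literally the same relations in $\Rightarrow$ and in $\Rightarrow_T$ (neither consults $T$), so such pairs $(p_{i-1},p_i)$ already lie in $\Rightarrow$. For a deterministic resolution step at a task $\etask{c_f}{\bar{c}_x}$ with label $\ell$, the premise of ``resolvedet'' requires $T(\ell)\in\mathcal{C}(c_f,\bar{c}_x)$, mirroring the membership premise of ``resolve''; hence ``resolve'' may pick the same expression $T(\ell)$ from $\mathcal{C}(c_f,\bar{c}_x)$, and since $\freshen{E}{\cdot}$ is a function of the (uniquely determined) context $E$ and its argument, it produces exactly $p_i$. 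Chaining these yields $p_0 \Rightarrow \cdots \Rightarrow p_n$.

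For the forward direction I would read $T$ off the given run. Each decomposition $p_{i-1} = E\decomp{s}$ at the step's label $\ell_i$ is unique, so the step is witnessed by a definite rule; for each resolution step, which chooses some $e_i \in \mathcal{C}(c_f,\bar{c}_x)$, set $T(\ell_i) := e_i$, leaving $T$ undefined at all other labels. Replaying with this $T$ reproduces the run exactly: reduction and dispatch steps are unchanged, and each resolution step now reads $T(\ell_i)=e_i$ and, using the same context $E$ and the same freshening function, regenerates precisely $p_i$. The only thing to verify is that $T$ is a well-defined partial function, i.e.\ that no label is the site of two resolution steps assigning it different results.

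The crux is therefore the combinatorial lemma that along a single run the step labels $\ell_1,\dots,\ell_n$ are pairwise distinct, and I expect this bookkeeping to be the main obstacle. First I would note that label-independence is an invariant: $p_0$ is (well-formed and hence) label-independent, and since $\operatorname{replace}$ preserves label-independence, every $p_i$ is label-independent and in particular has unique top-level labels, so each label names at most one top-level statement. Every rule acts at the hole's label $\ell$, deleting the unique statement there and inserting its replacements with the child labels $(\ell,1),\dots,(\ell,n)$ via $E\expand{\cdot}$, leaving all other labels untouched (the substitution $\vsubst{x}{y}$ never alters labels). Consequently, once a step consumes $\ell$ it disappears from the top level and can reappear only by being re-created as a child when its unique parent prefix is stepped. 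I would then prove that each label is consumed at most once by induction on label length: the labels introduced by $\initstate{}{}$ have length one and are never children of any step, so once consumed they never return; a longer label $(\ell',j)$ is born only when $\ell'$ is stepped, and by the induction hypothesis $\ell'$ is consumed at most once, so $(\ell',j)$ is born---and hence consumed---at most once. Pairwise distinctness of $\ell_1,\dots,\ell_n$ follows, which makes the $T$ above a genuine partial function and completes the proof.
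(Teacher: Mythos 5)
Your proposal is correct and follows essentially the same route as the paper: the backward direction observes that ``resolve'' can always pick whatever $T(\ell)$ ``resolvedet'' picked, and the forward direction collects the nondeterministic choices into $T$, with the crux being that no label is ever stepped twice. Your induction on label length merely fleshes out the justification the paper gives in one line (a consumed label cannot be recreated because, by label-independence, its prefix parent is never co-present and is itself stepped at most once), so this is the same argument in more detail rather than a different one.
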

\begin{proof}
    The forward direction holds, because ``resolve'' can always choose whatever $(\vec{s} \ssep{} y) \in \mathcal{C}(c_f, \bar{c}_x)$ ``resolvedet'' did. The backward direction holds because the nondeterministic choices made by ``resolve'' can be collected to form a $T$. Crucially, no sequence of steps will ever step the same $\ell$ twice, so no two ``resolve'' steps can require $T$ to map $\ell$ to two different expressions. No sequence can step the same $\ell$ twice because if there is a step at $\ell$, the $\ell$ and its statement are removed. No subsequent step can recreate the label $\ell$ due to label-independence.
\end{proof}

\begin{figure}
    \centering
    \begin{mathpar}
        \inferrule{
            \ell = \ctxident{E} \\
            \mquot{\vec{s} \ssep{} y} = T(\ell) \\
            \mquot{\vec{s} \ssep{} y} \in \mathcal{C}(c_f, \bar{c}_x) \\
            \mquot{\vec{s}' \ssep{} y'} = \freshen{E}{\vec{s} \ssep{} y} \\
        }{\stepresolve{
            E\decomp{\stask{z}{c_f}{\bar{c}_x}}
        }{T}{
            E\expand{\vec{s}' \ssep{} \sstmt{z}{y'}}
        }} \text{resolvedet} \and
    \end{mathpar}
    \caption{Deterministic resolution relation $\stepresolve{p}{T}{p'}$, stepping from $p$ to $p'$ by resolving a task to an expression according to environment trace $T$. The full deterministic semantics of \calcname{}, $\stepfull{}{T}{}$ is the union of $\stepreduce{}{}{}$, $\stepdispatch{}{}{}$, and $\stepresolve{}{T}{}$. We write $\stepat{p}{T}{\ell}{p'}$ to mean that the step occurred at $\ell$.}
    \label{fig:semantics:rules:resolvedet}
\end{figure}

\paragraph{Properties} We now prove several important properties of the semantics.
\begin{proposition}[Determinacy]
    For any $T, \ell$, and well-formed $p$, there is at most one $p'$ such that $\stepat{p}{T}{\ell}{p'}$.
\end{proposition}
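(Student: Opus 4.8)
The plan is to reduce the claim to two facts: that the label $\ell$ pins down a unique statement to step, and that the operation appearing in that statement determines both the applicable rule and its output. The statement only asserts \emph{at most one} $p'$, so I need not argue that any step exists — only that two steps at the same $\ell$ must coincide.

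First I would use well-formedness of $p$. Since a well-formed labeled expression is label-independent, no two top-level statements carry the same label and no label is a prefix of another; hence there is at most one decomposition $p = E\decomp{s}$ with $\ctxident{E} = \ell$. (If $\ell$ labels no top-level statement, then no step at $\ell$ exists and the claim holds vacuously.) Thus both the evaluation context $E$ and the old statement $s = \sstmt{x}{o}$ are uniquely determined by $p$ and $\ell$, and it remains to show that $s$ and $E$ determine the new statements $\vec{s}$ (and any accompanying substitution) uniquely.

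Next I would case on the shape of the operation $o$. Each of the seven operation forms is matched by at most one rule, with the single apparent exception of a call $\eapp{f}{x}$, which is matched by both ``reduce-call'' and ``dispatch''. Here I would invoke the fact that well-formedness forbids re-binding: $f$ has a unique binding occurrence in $E$, which defines it either as a function (so only ``reduce-call'' applies) or as a primitive (so only ``dispatch'' applies), never both. For the remaining forms the matching rule is forced immediately: a variable triggers ``reduce-alias'', a projection triggers ``reduce-proj'', a task triggers ``resolvedet'', and a function, tuple, or primitive definition can be stepped only by ``reduce-gc''. Within each rule I would then check that the output is pinned down: ``reduce-alias'' and ``reduce-gc'' produce $\vec{s} = \varnothing$ with the substitution (if any) read off $s$; ``reduce-proj'' reads the component $x_i$ from the unique tuple binding of $x$; ``reduce-call'' and ``resolvedet'' fetch a unique body — the unique function definition of $f$, respectively the single expression $T(\ell)$, since $T$ is a partial function — and then apply $\operatorname{freshen}_{E}$, which I treat as a deterministic function (as its functional notation indicates), so its output is uniquely determined.

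I expect the main obstacle to be the ``dispatch'' case, which requires a sub-lemma that the auxiliary relation $\obsprim{E}{x}{\bar{c}_x}$ is functional in $\bar{c}_x$: for fixed $E$ and $x$ there is at most one $\bar{c}_x$. This I would prove by induction on the ``pexp'' derivation, again using the uniqueness of the binding of $x$, so that ``pexp-value'' and ``pexp-tuple'' are mutually exclusive and each has a unique premise to recurse into. This is the one place that needs its own inductive argument and is where the no-rebinding consequence of well-formedness does real work; everything else is a straightforward case split once the uniqueness of the decomposition at $\ell$ is in hand.
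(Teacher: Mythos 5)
Your proposal is correct and follows essentially the same route as the paper's proof: a case analysis on the kind of old statement, with the crucial observation that well-formedness forbids re-binding, so the looked-up definitions of $f$ and $x$ (and hence the applicable rule and its output) are unique. You fill in details the paper elides --- the uniqueness of the decomposition at $\ell$ via label-independence and the functionality of the ``pexp'' relation --- but these are elaborations of the same argument, not a different one.
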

\begin{proof}
    We can show that, for each kind of old statement, stepping is deterministic. The only interesting cases are $\sproj{y}{i}{x}$ and $\sapp{y}{f}{x}$. In them, a statement defining $x$ or $f$ (respectively) is looked for in $E$. However, if such a statement is found, it must be unique, since $p$ is well-formed and well-formedness prevents re-binding of variable names.
\end{proof}

\begin{proposition}[Preservation]
    If $\Gamma \vdash p$ and $\stepfull{p}{T}{p'}$, then $\Gamma \vdash p'$.
\end{proposition}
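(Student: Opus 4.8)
The plan is to argue by cases on the rule deriving $\stepfull{p}{T}{p'}$. Every rule acts at a single label $\ell$, yielding $p' = \sreplace{p}{\ell}{s}{\vec{s}}$ (or this post-composed with $\vsubst{x}{y}$ in the ``reduce-alias'' case), so it suffices to analyze how replacing the statement at the hole affects well-formedness. The organizing device is to read the derivation of $\Gamma \vdash p$ through the hole: writing $p = E\decomp{s}$ with $\ctxident{E} = \ell$, well-formedness of $p$ exposes a \emph{local context} $\Gamma_E$ at the hole---namely $\Gamma$ together with the variables bound by the top-level statements of $E$ before the hole---for which $s$ is well-formed in $\Gamma_E$ and binds some variable $z$, while the part of $E$ following the hole is well-formed in $\Gamma_E$ extended by $z$. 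Symmetrically, to recover $\Gamma \vdash E\expand{\vec{s}}$ it is enough to check that $\vec{s}$ is well-formed in $\Gamma_E$, that the names it binds are fresh for $E$ (so no variable is re-bound), and that $\vec{s}$ rebinds $z$---or, in the alias case, that the trailing substitution deletes every occurrence of $z$.

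Two lemmas support the case analysis. The first is \emph{weakening}: if $\Gamma \vdash e$ and $\Gamma'$ extends $\Gamma$ only by names that $e$ does not itself bind, then $\Gamma' \vdash e$. This is exactly where the prohibition on re-binding earns its keep: in a well-formed term every bound name is globally unique, so enlarging the context by earlier-bound variables cannot shadow or capture anything within $e$. The second is a \emph{substitution} lemma tailored to ``reduce-alias'': if $\Gamma \vdash E\decomp{\sstmt{y}{x}}$, then $x \in \Gamma_E$ is in scope at every use of $y$ (all of which lie after the hole), so deleting the alias and applying $\vsubst{x}{y}$ leaves a well-formed term---substitution rewrites uses, never binders, so it introduces no re-binding.

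Given these, the light cases fall out directly. ``reduce-gc'' removes a statement binding $x$ under the side condition $x \notin \freevars{E}$, so the remainder never mentioned $x$ and is well-formed without it. ``dispatch'' replaces a call by a task $\stask{y}{c_f}{\bar{c}_x}$ whose operands are primitives; the task therefore has no free variables and the lone new statement rebinds the same $y$. ``reduce-proj'' rewrites $\sproj{y}{i}{x}$ to $\sstmt{y}{x_i}$, and since $x$ is referenced at the hole its defining tuple statement must precede the hole, placing each $x_i$ in $\Gamma_E$; the new statement again binds $y$. ``reduce-alias'' is discharged by the substitution lemma. This leaves ``reduce-call'' and ``resolvedet'', both of which \emph{inline} a body that was checked elsewhere.

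The crux is ``reduce-call'', and it is the step I expect to require the most care. The freshened body $\vec{s}' \ssep{} y'$ of $f$ is well-formed at $f$'s definition site, not at the hole; but because $f$ must be bound before the hole to be referenced there, the context at $f$'s definition is contained in $\Gamma_E$. The key fact to verify is that the free variables of $f$'s body lie within the variables bound before $f$ together with the parameter $x$; once $\freshen{E}{\efun{x}{\vec{s} \ssep{} y}}$ renames $x$ to $x'$ and the copied bindings away from $E$, weakening lifts well-formedness of the body to $\Gamma_E$ extended by $x'$. The prelude $\sstmt{x'}{w}$ is well-formed because $w \in \Gamma_E$, and the epilogue $\sstmt{z}{y'}$ rebinds the original $z$ from the body's return variable. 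The genuinely delicate point is that $f$'s \emph{original} definition survives unchanged in $E$, so freshening must make every name introduced by the inlined copy fresh for $E$, on pain of re-binding; this is precisely what preserves both label-independence and the no-re-binding discipline. ``resolvedet'' is the same argument but easier, since the resolved expression comes from $\mathcal{C}(c_f, \bar{c}_x) \subseteq S_e$ and is thus well-formed in the \emph{empty} context, making the weakening step to $\Gamma_E$ immediate after freshening.
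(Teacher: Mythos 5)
Your proposal is correct and follows essentially the same route as the paper's proof: a case analysis on the rules, checking (i) that the variable bound by the removed statement is re-bound by the new statements (or shown unused in ``reduce-gc'', or substituted away in ``reduce-alias''), and (ii) that every variable referenced by the new statements is in scope at the insertion point, with freshening and the no-re-binding discipline carrying the ``reduce-call'' and ``resolve'' cases. Your version is simply more explicit, isolating the weakening and substitution lemmas that the paper compresses into ``inspection of each rule.'' The one point you leave implicit is that well-formedness of a labeled expression also requires label-independence of $p'$; the paper discharges this separately by noting that statement replacement preserves label-independence and that the trailing substitution in ``reduce-alias'' does not touch labels, and you should state that half of the argument explicitly rather than only alluding to it in the ``reduce-call'' case.
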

\begin{proof}
    We must show that $p'$ is both well-formed as an expression and identifier-independent.

    Identifier-independence holds because, for every rule except ``reduce-alias'', $E' = \sreplace{E}{\ell}{s}{\vec{s}}$, and replacement preserves independence. For ``reduce-alias'', $E' = \sreplace{E}{\ell}{s}{\vec{s}}\vsubst{x}{y}$, but variable substitution does not affect statement identifiers.

    For expression well-formedness, we must consider whether removing the old statement causes problems for its bound variable $y$, and whether any new statements reference unbound variables. For the former, note that every rule provides a new statement that re-binds $y$, except ``reduce-gc'' which ensures $y$ does not occur elsewhere, and ``reduce-alias'' which replaces $y$ elsewhere. For the latter, inspection of each rule shows that variables referenced in the new statements must be bound at the place they are inserted.
\end{proof}

\begin{proposition}[Statement Replacement Commutes]
    \label{prop:replacement-commutes}
    If $p$ is well-formed, then 
    $$\sreplace{\sreplace{p}{\ell_1}{s_1}{\vec{s}_1}}{\ell_2}{s_2}{\vec{s}_2} = \sreplace{\sreplace{p}{\ell_2}{s_2}{\vec{s}_2}}{\ell_1}{s_1}{\vec{s}_1}$$
\end{proposition}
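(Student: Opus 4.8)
The plan is to prove this by induction on the list structure of $p$, reducing everything to the observation that the two replacements act on statements sitting at distinct, non-interfering positions. First, if $\ell_1 = \ell_2$ then after either first replacement the target label has been consumed (it is replaced by labels of the form $(\ell_1, i)$), so the second replacement is undefined and both sides are trivially equal; hence assume $\ell_1 \neq \ell_2$. Since $p$ is well-formed it is label-independent, so neither of $\ell_1, \ell_2$ is a prefix of the other, and each labels at most one top-level statement. In particular the decomposition $p = E_i \decomp{s_i}$ with $\ctxident{E_i} = \ell_i$ is unique when it exists, so $\sreplace{p}{\ell_i}{s_i}{\vec{s}_i}$ is a well-defined partial operation that touches only the single statement carrying label $\ell_i$.

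The two facts that make the induction go through are: (i) a \emph{no-collision} property, that none of the labels $(\ell_1, j)$ introduced by $E_1\expand{\vec{s}_1}$ equals $\ell_2$ (otherwise $\ell_1$ would be a prefix of $\ell_2$, contradicting label-independence), and symmetrically for $(\ell_2, j)$ and $\ell_1$; and (ii) a \emph{locality} lemma, that replacing at a top-level label $\ell$ leaves every other top-level statement---its label, its contents, and its relative order---untouched. Fact (ii) follows by unfolding the definitions of $E\decomp{\cdot}$, $E\expand{\cdot}$, and $\operatorname{append}$: replacing at the head label rewrites only the head (via $\operatorname{append}(\ell,1,\vec{s},p_0)$, which prepends statements labeled $(\ell, \cdot)$ and leaves the tail $p_0$ verbatim), while replacing at a non-head label recurses into the tail. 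Fact (i) guarantees that the target of the second replacement is still present, and still unique, after the first replacement has fired, so that ``both sides defined'' is preserved in each direction.

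With these in hand, the induction is on $p = \ell' : s' \ssep{} p_0$. If $\ell' = \ell_1$, then by (i) the $\ell_2$-replacement must occur inside $p_0$ on both orders, so both sides reduce (using (ii)) to $\operatorname{append}(\ell_1, 1, \vec{s}_1, \sreplace{p_0}{\ell_2}{s_2}{\vec{s}_2})$; the case $\ell' = \ell_2$ is symmetric; and if $\ell'$ is neither, both replacements recurse into $p_0$ and the claim follows from the induction hypothesis after peeling the common head $\ell' : s'$, using that a sublist of a label-independent expression is again label-independent. The main obstacle is purely the bookkeeping in the head case: one must verify that the relabeling performed by $\operatorname{append}$ commutes with a replacement firing in the tail, which rests squarely on label-independence via the no-collision property---and this is exactly why the proposition is stated only for well-formed $p$.
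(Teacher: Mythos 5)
Your proof is correct and takes essentially the same route as the paper's: both arguments come down to the observation that label-independence of $p$ forces $\ell_1$ and $\ell_2$ to be independent (else one side is undefined), so the two replacements act on disjoint top-level positions and therefore commute. You merely unfold the paper's one-line ``disjoint parts'' argument into an explicit list induction with the no-collision and locality facts spelled out.
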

\begin{proof}
    The labels $\ell_1,\ell_2$ must be independent (or else one side of the equality would be undefined). Thus the replacement operations act on disjoint parts of $p$, and so both sides are equal to doing the replacements simultaneously.
\end{proof}

\begin{proposition}[Strong Confluence]
    \label{prop:confluence}
    For any $T, p$, and $p_1 \neq p_2$, if $\stepfull{p}{T}{p_1}$ at $\ell_1$ and $\stepfull{p}{T}{p_2}$ at $\ell_2$, then there exists $p_{12}$ such that $\stepfull{p_1}{T}{p_{12}}$ at $\ell_2$ and $\stepfull{p_2}{T}{p_{12}}$ at $\ell_1$.
\end{proposition}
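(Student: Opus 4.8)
The plan is to show that two single steps at distinct labels genuinely \emph{commute}, so that the algebraic core of the argument is supplied by Proposition~\ref{prop:replacement-commutes}.

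First I would establish $\ell_1 \neq \ell_2$: were they equal, Determinacy would force $p_1 = p_2$, contradicting the hypothesis. Since $p$ is well-formed it is label-independent, and as $\ell_1,\ell_2$ label top-level statements, neither is a prefix of the other; hence the old statements $s_1$ (at $\ell_1$) and $s_2$ (at $\ell_2$) are distinct and lie in disjoint regions of $p$. Writing each step via the replacement operation---$p_1 = \sreplace{p}{\ell_1}{s_1}{\vec{s}_1}$ and $p_2 = \sreplace{p}{\ell_2}{s_2}{\vec{s}_2}$, each possibly post-composed with the substitution of a ``reduce-alias'' step---I would take as the common reduct $p_{12} = \sreplace{\sreplace{p}{\ell_1}{s_1}{\vec{s}_1}}{\ell_2}{s_2}{\vec{s}_2}$ (with the composite substitution), which by Proposition~\ref{prop:replacement-commutes} is symmetric in $\ell_1$ and $\ell_2$.

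The substance of the proof is then to check that the rule fired at $\ell_2$ in $p$ remains applicable in $p_1$ and still yields $\vec{s}_2$ (and symmetrically at $\ell_1$ in $p_2$); given this, the two derivations land on $p_{12}$ by the commutation lemma. Because the regions are disjoint, the old statement at $\ell_2$ is untouched by the $\ell_1$ step, so I only need the \emph{premises} of each rule to survive. I would case on the rule. The definition looked up by ``reduce-call'', ``reduce-proj'', or ``dispatch'' cannot be the statement deleted by a ``reduce-gc'' at $\ell_1$, since that rule's side condition $x \notin \freevars{E}$ forbids removing a still-referenced binding, and the needed definition is referenced by $s_2 \in E_1$. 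Conversely a ``reduce-gc'' at $\ell_2$ stays enabled after an $\ell_1$ step, because every rule inserts only material whose free variables are either already present in the surrounding context or (for resolution) empty, so no step can manufacture a fresh free occurrence of the collected variable, which $x \notin \freevars{E_2}$ already excludes everywhere outside $\ell_2$. For ``resolvedet'' the premises depend only on $T(\ell_2)$ and on the task's $c_f,\bar{c}_x$, none of which changes; for ``dispatch'' the pexp derivation is preserved because a dispatchable argument resolves entirely through primitive and tuple bindings, which a disjoint step neither deletes nor renames out of the resolution chain.

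I expect the main obstacle to be the bookkeeping induced by ``reduce-alias''. When the $\ell_1$ step deletes $\sstmt{y_1}{x_1}$ and applies $\vsubst{x_1}{y_1}$, the statement at $\ell_2$ seen in $p_1$ is $s_2\vsubst{x_1}{y_1}$, and its step produces correspondingly substituted new statements; I would verify that performing the two steps in either order yields the same net substitution, using that substitutions compose and that the uniquely-bound aliased variable $y_1$ can itself never be the function, tuple, or primitive looked up at $\ell_2$ (its only binding is the alias), so the rule type and the target of each lookup are unchanged by the rename. A final minor point is to choose the fresh variables of any ``reduce-call''/``resolve'' consistently across the two orders---possible since freshening need only avoid the disjoint surrounding statements---so that both derivations reach literally the same $p_{12}$. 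With these checks and Proposition~\ref{prop:replacement-commutes} in hand, we obtain $\stepfull{p_1}{T}{p_{12}}$ at $\ell_2$ and $\stepfull{p_2}{T}{p_{12}}$ at $\ell_1$, completing the diamond.
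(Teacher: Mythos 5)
Your proof is correct and follows essentially the same route as the paper's: reduce both orders to the commutation of statement replacement (Proposition~\ref{prop:replacement-commutes}) and treat ``reduce-alias'' separately by checking that generating new statements commutes with the variable substitution. You are more explicit than the paper in verifying that each rule's premises survive the other step (e.g.\ that ``reduce-gc'' cannot delete a definition the other rule looks up, and that the pexp chain for ``dispatch'' is preserved), which is a welcome filling-in of details the paper leaves implicit.
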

\begin{proof}
    Since every rule except ``reduce-alias'' simply does a statement replacement, which commutes (Proposition~\ref{prop:replacement-commutes}), every rule except ``reduce-alias'' commutes. To see that every other rule commutes with ``reduce-alias'' for $\sstmt{y}{x}$, consider each rule, and note that generating the new statements commutes with replacing $y$ with $x$. To see that ``reduce-alias'' commutes with itself, note that either the two alias statements have disjoint variables, in which case variable substitution commutes, or they are of the form $\sstmt{y}{x}$, $\sstmt{z}{y}$, in which case either way all occurrences of $y$ and $z$ will be replaced with $x$.
\end{proof}

\subsection{Evaluation Strategy}
\label{sec:corecalculus:eval}
The confluence of our semantics enables us to soundly step statements in unusual orders. However, for an implementation, we need an actual evaluation strategy. Sequential, call-by-value evaluation would step statements in order. Lazy evaluation would work backward from the return variable, stepping statements needed to turn it into a value. By contrast, the idea of opportunistic evaluation is to step as many statements as we can, anywhere we can. Some may not be able to step yet, e.g. a function call where the function does not yet have a known definition, but we will step all the rest.

For a well-formed labeled expression $p$ and a set of labels $L$, we say $p$ \emph{steps simultaneously at $L$} to $p$ (with environment trace $T$), $p \Rightarrow^L_T p'$, if stepping each identifier $\ell \in L$ in some order yields $p'$. Note that, by confluence, the order in which each individual step is taken does not matter, so there is at most one such $p'$. We say $\operatorname{steppable}(p,\ell)$ if there exists a $p'$ such that $\stepat{p}{}{\ell}{p'}$, and we let $\operatorname{steppable}(p) = \{\ell : \operatorname{steppable}(p, \ell)\}$. Thus we can finally define \emph{opportunistic evaluation} $p \Rrightarrow_T p'$ to be stepping $p$ at every steppable label simultaneously. These operations are defined formally in Figure~\ref{fig:semantics:oppeval}. For an example of opportunistic evaluation, see Figure~\ref{fig:reduction-example}.

\begin{figure}
    \centering
    \begin{mathpar}
        \inferrule{
            L = \{\ell_1, \ldots, \ell_n\} \\
            \forall i\leq n.\ \stepat{p_{i-1}}{T}{\ell_i}{p_i} \\
        }{
            \stepat{p_0}{T}{L}{p_n}
        } \text{} \and
        \inferrule{
            \stepat{p}{}{\ell}{p'}
        }{
            \operatorname{steppable}(p, \ell)
        } \text{} \and
        \inferrule{
            L = \left\{\ell : \operatorname{steppable}(p, \ell) \right\} \\
            p \Rightarrow^L_T p' \\
        }{p \Rrightarrow_T p'} \text{} \and
    \end{mathpar}
    \caption{Stepping at set of labels $L$ simultaneously, $\stepat{p}{T}{L}{p'}$. Steppability of a label $\ell$ in $p$, $\operatorname{steppable}(p, \ell)$. Opportunistic evaluation, $p \Rrightarrow_T p'$. $T$ is an environment interaction trace.}
    \label{fig:semantics:oppeval}
\end{figure}

\begin{figure}[t]
    \footnotesize
    \begin{subfigure}[t]{0.15\textwidth}
        \centering
        \begin{minted}[escapeinside=``]{python}
1: g := fun x:
         y := f x
         y
`\textbf{2: z1 := g z0}`



`\textbf{3: z2 := g z1}`



   z2
        \end{minted}
        \caption{$p_0$}
        \label{fig:reduction-example:a}
    \end{subfigure}\hfill
    \begin{subfigure}[t]{0.15\textwidth}
        \centering
        \begin{minted}[escapeinside=``]{python}
1:   g := fun x:
         y := f x
         y
`\sout{2:   z1 := g z0}`
`\textbf{2,1: x0 := z0}`
`\textbf{2,2: y0 := f x0}`
`\textbf{2,3: z1 := y0}`
3:   z2 := g z1



     z2
        \end{minted}
        \caption{$\stepat{p_0}{}{2}{p_1}$}
        \label{fig:reduction-example:b}
    \end{subfigure}\hfill
    \begin{subfigure}[t]{0.175\textwidth}
        \centering
        \begin{minted}[escapeinside=``]{python}
1:   g := fun x:
         y := f x
         y

2,1: x0 := z0
2,2: y0 := f x0
2,3: z1 := y0
`\sout{3:   z2 := g z1}`
`\textbf{3,1: x1 := z1}`
`\textbf{3,2: y1 := f x1}`
`\textbf{3,3: z2 := y1}`
     z2
        \end{minted}
        \caption{$\stepat{p_1}{}{3}{p_2}$}
        \label{fig:reduction-example:c}
    \end{subfigure}\hfill
    \begin{subfigure}[t]{0.175\textwidth}
        \centering
        \begin{minted}[escapeinside=``]{python}
1:   g := fun x:
         y := f x
         y

2,1: x0 := z0
2,2: y0 := f x0
2,3: z1 := y0

3,1: x1 := z1
3,2: y1 := f x1
3,3: z2 := y1
     z2
        \end{minted}
        \caption{$p_2$}
        \label{fig:reduction-example:d}
    \end{subfigure}
    \caption{A single step of opportunistic evaluation, $p_0 \Rrightarrow p_2$. The steppable statements $L$ in $p_0$ are $2$ and $3$, shown in bold in (a). The statement labelled $1$ cannot step because $g$ is referenced elsewhere in the term. One concrete sequence of semantics steps from $p_0$ to $p_2$ is $p_0 \Rightarrow^2 p_1 \Rightarrow^3 p_2$, shown in (b) and (c).}
    \label{fig:reduction-example}
\end{figure}

A key property of opportunistic evaluation is that, for any set of labels $L$ that could be stepped by the semantics, they will all eventually be stepped by opportunistic evaluation. We call this \emph{fairness}, by analogy with fair scheduling strategies. Intuitively, fairness holds because, if we could step at a sequence of labels $\ell_1, \ldots, \ell_n$, and instead we take $n$ steps of opportunistic evaluation, by the $i$-th step of opportunistic evaluation we are guaranteed to have stepped $\ell_i$ (though each opportunistic evaluation step may be many individual semantics steps, and so $\ell_i$ may have been incidentally stepped earlier).

Note that sequential evaluation is not fair: the program \kwtt{() = diverge (); () = print <"foo">} will never execute \kwtt{print}, because it will loop forever on \kw{diverge}, even though the semantics allows stepping \kwtt{print} at any time.

\begin{proposition}[Fairness]
    For any $p, T, L$, if $\stepat{p}{T}{L}{p'}$ for some $p'$, then $p \Rrightarrow_T^* p''$ for some $p''$ such that $\stepat{p}{T}{L_*}{p''}$ and $L \subseteq L_*$.
\end{proposition}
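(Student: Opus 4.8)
The plan is to unfold the hypothesis into a concrete witnessing sequence of single steps $p = p_0 \Rightarrow_T^{\ell_1} p_1 \Rightarrow_T^{\ell_2} \cdots \Rightarrow_T^{\ell_n} p_n = p'$ with $\{\ell_1,\ldots,\ell_n\} = L$ (this is exactly what $\stepat{p}{T}{L}{p'}$ unpacks to), and then to run \emph{exactly} $n$ opportunistic steps $p = q_0 \Rrightarrow_T q_1 \Rrightarrow_T \cdots \Rrightarrow_T q_n = p''$, where the $i$-th opportunistic step fires the full set $M_i = \operatorname{steppable}(q_{i-1})$. (Taking $n$ steps is always well-defined: an opportunistic step on a normal form fires the empty set and returns the term unchanged.) I will show that each $\ell_i$ is fired somewhere within these $n$ steps, and then set $L_* = M_1 \cup \cdots \cup M_n$, read as the labels of the composed sequence. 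Because each opportunistic step is itself a simultaneous step, their composition is a single simultaneous step, so $\stepat{p}{T}{L_*}{p''}$ holds immediately and $p \Rrightarrow_T^{*} p''$; it then remains only to verify $L \subseteq L_*$.

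The core is an induction maintaining the invariant that opportunistic evaluation ``stays ahead of'' the witness: after $i$ opportunistic steps there is a set $N_i$ with $\stepat{p_i}{T}{N_i}{q_i}$, and every one of $\ell_1,\ldots,\ell_i$ lies in $M_1\cup\cdots\cup M_i$. Two facts drive the inductive step. First, any single step out of a term is subsumed by the opportunistic step at that term, since opportunistic evaluation fires \emph{every} steppable label and the label of any available single step is by definition steppable; so the next witness label $\ell_{i+1}$, if not already consumed inside $N_i$, lies in $M_{i+1}$ and is fired by the $(i{+}1)$-th opportunistic step. Second, simultaneous steps may be freely reordered---this is the order-independence built into the definition of $\Rightarrow^L_T$ via Strong Confluence (Proposition~\ref{prop:confluence}) together with Proposition~\ref{prop:replacement-commutes}---so I can pull $\ell_{i+1}$ to the front of the relevant combined sequence (it is already present in $p_i$, hence movable) and re-expose the remaining lead as $N_{i+1} = (N_i\setminus\{\ell_{i+1}\})\cup M_{i+1}$. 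Composing $\stepat{p_i}{T}{N_i}{q_i}$ with $q_i \Rrightarrow_T q_{i+1}$ and reordering then restores the invariant at $i+1$, using determinacy of single steps to identify the two ways of firing $\ell_{i+1}$.

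The main obstacle is the one nontrivial side-lemma this induction needs: \emph{steppability is preserved by stepping at an independent label}, i.e.\ if $\operatorname{steppable}(r,\ell)$ and $\stepat{r}{T}{\ell'}{r''}$ with $\ell'\neq\ell$, then $\operatorname{steppable}(r'',\ell)$. I expect to prove this by cases on the rule enabling $\ell$ and the rule fired at $\ell'$, with the delicate cases touching \textbf{reduce-gc} and \textbf{reduce-alias}. For garbage collection I must rule out that firing $\ell'$ \emph{creates} a new reference to a variable $x$ that $\ell$ wants to collect: this cannot happen, because $\operatorname{steppable}(r,\ell)$ requires $x\notin\freevars{E}$, so $x$ is free in no function body of $r$, and hence the freshened body copied in by a \textbf{reduce-call} at $\ell'$ (and its argument) cannot mention $x$. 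Dually, \textbf{reduce-gc} at $\ell'$ never deletes a definition on which $\ell$ depends, since any such definition is referenced by $\ell$'s statement and is therefore not collectable. For aliasing, the substitution $\vsubst{x}{y}$ done by \textbf{reduce-alias} preserves the \emph{shape} of every definition $\ell$ relies on (a function, tuple, primitive, or task binding), because well-formedness forbids re-binding and thus forbids a variable from being simultaneously an alias and one of those definitions. Once this lemma is in hand the induction closes, and composing the $n$ opportunistic steps yields $\stepat{p}{T}{L_*}{p''}$ with $L\subseteq L_*$, as required.
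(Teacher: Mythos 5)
Your proposal is correct and follows essentially the same route as the paper: the load-bearing ingredient in both is the lemma that a steppable label not in the fired set remains steppable after the step (your ``side-lemma'' is exactly the paper's Proposition~\ref{prop:steppable-stable}, proved by the same case analysis on \textbf{reduce-gc}, \textbf{reduce-alias}, etc.), and both arguments then run on the order of $\lvert L \rvert$ opportunistic steps, each of which is guaranteed to catch the next still-pending label. The only difference is bookkeeping --- you maintain a lockstep ``lead'' invariant $N_i$ and reorder via strong confluence, whereas the paper inducts on $\lvert L \rvert$ directly and appeals to the stability lemma once per peeled-off label --- which does not change the substance of the argument.
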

\begin{proof}
    We do induction on the number of steps $\lvert L \rvert = n$. For $n=1$, $L = \{\ell\}$. If we take one step of opportunistic evaluation, we get $L_* = \operatorname{steppable}(p)$. But $\ell$ must have been steppable, so $L \subseteq L_*$.

    For $n>1$, let $L = \{\ell\} \cup L'$ for some $\ell$ and $L'$. By the induction hypothesis, we have an $L'_*$ stepped by opportunistic evaluation such that $\stepat{p}{T}{L'_*}{p''}$ for some $p''$ and $L' \subseteq L'_*$. If $\ell \in L'_*$, then already $L \subseteq L'_*$, and we are done. If $\ell \notin L'_*$, let $L^\Delta = L'_* \setminus L'$, i.e. all the extra steps done by opportunistic evaluation. Since $\ell$ is steppable at $p'$ and $\stepat{p'}{T}{L^\Delta}{p''}$, by Proposition~\ref{prop:steppable-stable}, $\ell$ is steppable at $p''$. Thus we can take one more step of opportunistic evaluation, $p'' \Rrightarrow_T p'''$ and let $L_*$ be such that $\stepat{p}{T}{L_*}{p'''}$.
\end{proof}

\begin{proposition}
    \label{prop:steppable-stable}
    If $\stepat{p}{T}{L}{p'}$, $\ell \in \operatorname{steppable}(p)$, and $\ell \notin L$, then $\ell \in \operatorname{steppable}(p')$.
\end{proposition}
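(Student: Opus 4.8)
The plan is to induct on the number of labels $\lvert L \rvert = n$ fired by the simultaneous step $\stepat{p}{T}{L}{p'}$, peeling off one individual step at a time and using Strong Confluence (Proposition~\ref{prop:confluence}) to transport the steppability of $\ell$ across each peeled step. When $n = 0$ we have $p' = p$, and there is nothing to prove. For the inductive step, I invert the definition of $\stepat{p}{T}{L}{p'}$ to obtain an enumeration $\ell_1, \ldots, \ell_n$ of $L$ together with a chain $\stepat{p}{T}{\ell_1}{p_1}$ and $\stepat{p_1}{T}{L \setminus \{\ell_1\}}{p'}$. I peel off exactly this first step $\ell_1$; note that no reordering of the enumeration is required, which avoids any circular dependence on the fact that simultaneous stepping is order-independent.

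The crux is to show $\ell \in \operatorname{steppable}(p_1)$, i.e. that $\ell$ survives the single step at $\ell_1$. Since $\ell \in \operatorname{steppable}(p)$, fix a witness $\stepat{p}{T}{\ell}{p_\ell}$. Because $\ell \notin L \ni \ell_1$ we have $\ell \neq \ell_1$, so the two spans $\stepat{p}{T}{\ell_1}{p_1}$ and $\stepat{p}{T}{\ell}{p_\ell}$ fire distinct labels. I then argue $p_1 \neq p_\ell$ and apply Proposition~\ref{prop:confluence} to obtain a common reduct via a step $\stepat{p_1}{T}{\ell}{p_{1\ell}}$, which is precisely a witness that $\ell \in \operatorname{steppable}(p_1)$.

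Finally I invoke the induction hypothesis at $p_1$, the label set $L \setminus \{\ell_1\}$ of size $n - 1$, and the label $\ell$: we have just established $\ell \in \operatorname{steppable}(p_1)$, we have $\ell \notin L \setminus \{\ell_1\}$ since $\ell \notin L$, and $\stepat{p_1}{T}{L \setminus \{\ell_1\}}{p'}$ holds. The hypothesis yields $\ell \in \operatorname{steppable}(p')$, closing the induction.

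The one delicate point I expect is discharging the inequality $p_1 \neq p_\ell$, which the statement of Proposition~\ref{prop:confluence} requires (it excludes the degenerate case $p_1 = p_2$). This is exactly where label-independence does the work: each step consumes only the label it fires at, replacing that $\ell'$ by the fresh labels $(\ell', i)$, while leaving every other top-level label untouched; moreover, since $p$ is label-independent, $\ell_1$ is not a prefix of $\ell$, so none of the newly created labels $(\ell_1, i)$ can coincide with $\ell$. Hence $p_1$ still carries the top-level label $\ell$ whereas $p_\ell$ does not, so the two single-step reducts genuinely differ in their label sets and $p_1 \neq p_\ell$. Everything else is routine bookkeeping on the inductive structure of simultaneous stepping.
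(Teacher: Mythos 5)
Your proof is correct, and it shares the paper's outer scaffolding---induction on $\lvert L\rvert$, peeling off one individual step at a time---but it settles the crucial single-step case by a genuinely different route. The paper argues directly, by case analysis on the kind of statement at $\ell$ (alias, call, projection, gc-able definition, task), checking for each that a step at $\ell_1\neq\ell$ cannot destroy the side conditions that made it steppable (e.g.\ the statement defining $f$ cannot have been deleted by garbage collection, since the call at $\ell$ still references $f$). You instead invoke Strong Confluence (Proposition~\ref{prop:confluence}) as a black box: from $\stepat{p}{T}{\ell_1}{p_1}$ and a witness at $\ell$, the common reduct comes equipped with a step $\stepat{p_1}{T}{\ell}{p_{1\ell}}$, which is precisely a steppability witness. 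This is not circular (confluence is proved independently of this proposition), and your discharge of the side condition $p_1\neq p_\ell$ via label sets---$p_1$ retains the top-level label $\ell$ while $p_\ell$ replaces it by labels of the form $(\ell,i)$, which by label-independence cannot coincide with any surviving label---is sound. Your route buys brevity and reuse of an already-proved result; the paper's buys self-containment and makes visible exactly which syntactic facts keep each kind of statement steppable. One small wrinkle to patch: $\operatorname{steppable}$ is defined via the nondeterministic relation $\stepat{p}{}{\ell}{p_\ell}$, whereas Proposition~\ref{prop:confluence} quantifies over a single trace $T$, so if the statement at $\ell$ is a task your witness need not be a $T$-step. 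Since $\ell\notin L$ and deterministic resolution only consults $T$ at the label being stepped, you may extend $T$ at $\ell$ to agree with the witness's choice without disturbing the step at $\ell_1$; the resulting step at $\ell$ in $p_1$ is in particular a nondeterministic step, so steppability follows.
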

\begin{proof}
    We do induction on the number of steps $\lvert L \rvert = n$. For $n>1$, let $L = \{\ell'\} \cup L'$. We have $\stepat{p}{T}{L'}{p''}$ and $\stepat{p''}{T}{\{\ell'\}}{p'}$, so applying the induction hypothesis twice, we get that $\ell$ steppable in $p''$ and then $\ell$ steppable in $p'$. For $n=1$, $L = \{\ell'\}$. Let $s$ and $s'$ be the statements at $\ell$ whose steppability is being judged before and after stepping, i.e. $E\decomp{s} = p$, $E'\decomp{s'} = p'$, and $\ctxident{E} = \ctxident{E'} = \ell$. Consider the possible statements $s$ could be. If $s$ is an alias, it is unconditionally steppable. If $s$ is a function call, the only way it could lose steppability is if the statement defining $f$ were deleted by ``step-gc'', but that can't have happened, since $f$ is referenced by $s$. Similarly, if $s$ is a projection, the statement defining $x$ cannot have been deleted. If $s$ is a function, tuple, or primitive, then by steppability, $x \notin \freevars{E}$, but then there's no way stepping at $\ell'$ could have introduced an occurrence of $x$, so it's still steppable. If $s$ is a task, then it is unconditionally steppable.
\end{proof}

\begin{proposition}[Soundness]
    For terminating programs, opportunistic and sequential evaluation are equivalent.
\end{proposition}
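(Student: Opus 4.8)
The plan is to view both evaluation strategies as particular maximal reduction sequences of the deterministic semantics $\stepfull{}{T}{}$ for a fixed external interaction trace $T$, and to obtain the result from abstract rewriting properties of $\stepfull{}{T}{}$. Fixing $T$ is what makes the comparison meaningful: it pins down the (otherwise nondeterministic) external results, so we compare sequential call-by-value evaluation (which steps the earliest steppable top-level statement) against opportunistic evaluation $\Rrightarrow_T$ (which steps every steppable label at once; each mega-step decomposes into individual $\stepfull{}{T}{}$ steps taken in any order, well-defined by Proposition~\ref{prop:steppable-stable}) under the \emph{same} $T$. I read ``equivalent'' as: if one strategy terminates then so does the other, they reach the same final expression, and they initiate the same external calls. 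Two preliminary facts I will use throughout: Preservation keeps every intermediate term well-formed (so the propositions below apply at each step), and no label is ever stepped twice (as argued in the proof that a trace $T$ exists), so the labels a terminating sequential run steps form an honest \emph{set} $L$ with $\stepat{p_0}{T}{L}{p_n}$.

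First I would upgrade Strong Confluence (Proposition~\ref{prop:confluence}) to the diamond property for $\stepfull{}{T}{}$: given $\stepat{p}{T}{\ell_1}{p_1}$ and $\stepat{p}{T}{\ell_2}{p_2}$, either $p_1=p_2$ (closing trivially; Determinacy covers the case $\ell_1=\ell_2$) or Proposition~\ref{prop:confluence} closes the diamond in one step on each side, with the closing steps occurring at $\ell_2$ and $\ell_1$ respectively. From the diamond property I would prove a single ``developments'' lemma by induction on the normalization length $n$: if $p_0$ reaches a normal form in $n$ steps, then every maximal $\stepfull{}{T}{}$-reduction from $p_0$ (i)~terminates, (ii)~at that same normal form, and (iii)~steps exactly the same set of labels. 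The inductive step permutes the first step of an arbitrary reduction past the first step of a fixed normalizing one using the diamond, and the label tracking in Proposition~\ref{prop:confluence} is exactly what lets the common reduct's order-invariant label set close the argument (when the two first steps coincide, Determinacy forces the reducts to coincide).

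Applying the lemma finishes the structural part. A program is terminating precisely when its sequential run reaches a $\stepfull{}{T}{}$-normal form $p_n$ (here I would take care to fix the definition of sequential evaluation, including its treatment of dead bindings via ``reduce-gc'', so that this equivalence holds). Opportunistic evaluation is a maximal reduction that halts only at a fixpoint, i.e.\ only when no label is steppable---a normal form---and each of its mega-steps is finite because a finite term has finitely many steppable labels. Hence by the lemma it terminates, at $p_n$, stepping exactly the set $L$. (Fairness gives an independent confirmation of the ``stepping $L$'' direction, guaranteeing opportunistic eventually performs every step sequential does.) For the external calls, note that these are exactly the dispatch steps, which occur at labels in $L$ whose old statement is a primitive call, and the call they initiate is the pair $(c_f, \bar{c}_x)$ read off through the ``pexp'' relation. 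Since both evaluations step the same $L$ and reach the same $p_n$, and this pair consists of primitive \emph{values}---invariant under the variable substitution that ``reduce-alias'' performs---the two strategies initiate identical external calls, and the shared $T$ makes their resolutions agree.

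I expect two bookkeeping points to be the main obstacles. The first is carrying the label information faithfully through the developments induction so that conclusion~(iii), ``same set of labels'' (rather than merely ``same normal form''), is actually established; this is the part that certifies no external call is dropped, duplicated, or invented, and it is where the enriched labeled-expression machinery and Proposition~\ref{prop:replacement-commutes} earn their keep. The second is the interaction of ``reduce-alias'' with the identity of a dispatched call: the literal old statement at a label can change under renaming, so ``same external call'' must be phrased in terms of the renaming-invariant primitive data $(c_f, \bar{c}_x)$ rather than the syntactic statement. The remaining, more routine obligation is pinning down sequential evaluation precisely enough that ``terminating under sequential evaluation'' coincides with ``$\stepfull{}{T}{}$-normalizing,'' which is the hypothesis that lets the developments lemma transfer termination from sequential to opportunistic evaluation.
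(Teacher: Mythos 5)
Your proposal is correct and follows essentially the same route as the paper, whose proof is only a three-sentence sketch: both strategies, when terminating, step every statement, so confluence forces the same normal form and the same external calls. Your write-up is simply the rigorous expansion of that sketch (diamond property from Proposition~\ref{prop:confluence}, a developments-style induction tracking the label set, and the observation that dispatched calls are identified by their renaming-invariant primitive data), filling in details the paper leaves implicit.
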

\begin{proof}
Note that sequential evaluation, when it terminates, executes every statement. Opportunistic evaluation also executes every statement when it terminates. Thus by confluence, they make exactly the same external calls and produce the same result.
\end{proof}

In fact, for nonterminating computations, opportunistic evaluation can produce more external calls than both strict and lazy evaluation combined. For example, in
\begin{inlinebox}
    \begin{minipage}[c]{0.45\textwidth}
\begin{verbatim}
() := diverge (); () := print <"foo">; <"bar">
\end{verbatim}
    \end{minipage}
\end{inlinebox}
neither strategy would print ``foo''---strict evaluation because it would get stuck stepping calls to \kwtt{diverge ()} forever, and lazy evaluation because printing is not needed for the return value \kwtt{<"bar">}. However, opportunistic evaluation would perform the print.

\section{\toolname{} Language}
\label{sec:epiclang}

As a core calculus, \calcname{} lacks standard features such as data structures and control-flow constructs.
Thus, we build \toolname{} on top of \calcname{}; \toolname{}
offers data structures and standard control-flow (Section~\ref{sec:epiclang:data}), as well as automatic streaming (Section~\ref{sec:epiclang:streaming}). We also show how sequencing can be enforced between external calls such as printing, where ordering is important (Section~\ref{sec:epiclang:sequencing}).

\subsection{Data Structures and Control-Flow via Church Encodings}
\label{sec:epiclang:data}

Rather than changing the core calculus to add explicit data-constructors or control-flow, we define them in-language using Church encodings~\cite{churchencoding}. This has two advantages: first, they are already supported by the core calculus and thus automatically benefit from opportunistic evaluation; second, they offer a natural notion of ``partial data'', discussed in Section~\ref{sec:epiclang:streaming}. Though Church encodings are an established concept, we review them here.
\paragraph{Booleans}
The Church encodings of \kwtt{if}, \kw{true}, and \kw{false} are:
\begin{inlinebox}
    \hfill
    \begin{minipage}[c]{0.40\textwidth}
\begin{verbatim}
if := fun (cond, case_true, case_false):
    cond (case_true, case_false)
\end{verbatim}
    \end{minipage}
    \hfill
    \begin{minipage}[c]{0.20\textwidth}
\begin{verbatim}
false := fun (t, f):
    f ()
\end{verbatim}
    \end{minipage}
    \hfill
    \begin{minipage}[c]{0.19\textwidth}
\begin{verbatim}
true := fun (t, f):
    t ()
\end{verbatim}
    \end{minipage}
    \hfill\ 
\end{inlinebox}
As an example, if the condition of an \kwtt{if} is \kw{false}, the false branch will be executed:
\begin{inlinebox}
    \footnotesize
    \hfill
    \begin{minipage}[c]{0.20\textwidth}
    \begin{minted}[escapeinside=||]{text}
c := if (false, a, b)
    \end{minted}
    \end{minipage}
    \hfill
    $\Rightarrow^*$
    \hfill
    \begin{minipage}[c]{0.16\textwidth}
    \begin{minted}[escapeinside=||]{python}
c := false (a, b)
    \end{minted}
    \end{minipage}
    \hfill
    $\Rightarrow^*$
    \hfill
    \begin{minipage}[c]{0.06\textwidth}
    \begin{minted}[escapeinside=||]{python}
c := b ()
    \end{minted}
    \end{minipage}
    \hfill\ 
\end{inlinebox}
Note that external tasks can affect program control flow by returning these Church Booleans.

\paragraph{Lists}
Lists are Church encoded similar to \kw{fold}---as functions taking an initial ``accumulator state'' (\kw{nil}) and a function (\kw{cons}) which is called once per element of the list, with the accumulator state passed through. For instance, the Church encoding of the list \kw{[a,b,c]} is:
\begin{inlinebox}
    \hfill
    \begin{minipage}[c]{0.40\textwidth}
\begin{verbatim}
list_abc := fun (nil, cons):
    cons (cons (cons (hil, a), b), c)
\end{verbatim}
    \end{minipage}
    \hfill
    $\approx$
    \hfill
    \begin{minipage}[c]{0.30\textwidth}
\begin{verbatim}
list_abc := fun (nil, cons):
    state := nil
    state := cons (state, a)
    state := cons (state, b)
    state := cons (state, c)
    state
\end{verbatim}
    \end{minipage}
    \hfill\ 
\end{inlinebox}

Using this intuition, we can define operations for constructing (\kw{nil}, \kw{cons}) and folding over (\kw{fold}), and concatenating (\kw{concat}) lists:
\begin{inlinebox}
    \hfill
    \begin{minipage}{0.4\textwidth}
\begin{minted}{text}
nil := fun (state, append):
    state

cons := fun (hd, tl):
    fun (state, append):
        state := append (state, hd)
        state := fold (tl, state, append)
        state
\end{minted}
    \end{minipage}
    \hfill
    \begin{minipage}{0.4\textwidth}
\begin{minted}{text}
fold := fun (l, init, each):
    l (init, each)

concat := fun (l1, l2):
    fun (state, append):
        state := fold (l1, state, append)
        state := fold (l2, state, append)
        state
\end{minted}
    \end{minipage}
    \hfill\
\end{inlinebox}

With these lists, opportunistic evaluation automatically unrolls loops, which is crucial for achieving parallelism. For \kw{list_abc} defined above:
\begin{inlinebox}
    \footnotesize
    \hfill
    \begin{minipage}[c]{0.26\textwidth}
    \begin{minted}[escapeinside=||]{python}
result := fold (
    list_abc, init, each)
    \end{minted}
    \end{minipage}
    \hfill
    $\Rightarrow^*$
    \hfill
    \begin{minipage}[c]{0.20\textwidth}
    \begin{minted}[escapeinside=||]{python}
result := list_abc (
    init, each)
    \end{minted}
    \end{minipage}
    \hfill
    $\Rightarrow^*$
    \hfill
    \begin{minipage}[c]{0.24\textwidth}
    \begin{minted}[escapeinside=||]{python}
state := init
state := each (state, a)
state := each (state, b)
state := each (state, c)
result := state
    \end{minted}
    \end{minipage}
    \hfill\ 
\end{inlinebox}
At this point, all three calls to \kw{each} are at the top level, so they will be executed opportunistically (if \kw{each} is defined), regardless of whether \kw{init} is.

Note that it is also possible to define recursive functions via a standard fixpoint combinator.

\subsection{Partial Data and Streaming Computation}
\label{sec:epiclang:streaming}
Not only do Church encodings provide a way of defining instances of traditional algebraic data types, like lists, they also provide a way of defining \emph{partial} versions of these structures. For instance, the partial list $(\kw{["H"] + h1 + ["J"] + h2})$  (i.e. \kw{"H"} at the beginning and \kw{"J"} somewhere after) can be represented as follows, with the free variables \kw{h1} and \kw{h2} standing in for the unknown parts of the list:
\begin{inlinebox}
    \hfill
    \begin{minipage}[c]{0.3\textwidth}
\begin{minted}{text}
partial := fun (state, append):
    state := append (state, "H")
    state := h1     (state, append)
    state := append (state, "J")
    state := h2     (state, append)
    state
\end{minted}
    \end{minipage}
    \hfill\ 
\end{inlinebox}
Note that this is richer than typical notions of streaming, which restrict partial lists to only have a hole at the end. However, if we take two partial lists and concatenate them (e.g. concatenating two LLM calls), even if individually they only had a hole at the end, the result will have two holes, with one in the middle. With traditional data constructors and pattern matching, it is not obvious how to represent data with holes ``in the middle'' like this. However, these Church-encoded partial lists behave the way we want, with loops over them unrolling automatically, exposing loop iterations for all known list element.
\begin{inlinebox}
    \hfill
    \begin{minipage}[c]{0.22\textwidth}
\begin{minted}[escapeinside=||]{python}
for v in partial:
    print (&stdout, v)
\end{minted}
    \end{minipage}
    \hfill
    $\approx$
    \hfill
    \begin{minipage}[c]{0.31\textwidth}
\begin{verbatim}
stdout := partial (stdout, print)
\end{verbatim}
    \end{minipage}
    \hfill
    $\Rightarrow^*$
    \hfill
    \begin{minipage}[c]{0.29\textwidth}
\begin{minted}{text}
stdout := print (stdout, "H")
stdout := h1    (stdout, print)
stdout := print (stdout, "J")
stdout := h2    (stdout, print)
\end{minted}
    \end{minipage}
    \hfill\ 
\end{inlinebox}

This idea can be leveraged to achieve automatic streaming of programs in \toolname{}, evaluating on their partial output as it becomes available, without waiting for the entire result. We give an example of such a streaming external call in Figure~\ref{fig:streaming-example}. Note that a similar approach can be used to stream richer data than lists, such trees and other algebraic datatypes.

\begin{figure}[t]
    \centering
    \scriptsize
    \vskip 0pt
    \hfill
    \begin{subfigure}[t]{0.27\textwidth}
\begin{minted}{text}
resp := <<cf "foo">>








stdout := resp (stdout, print)
\end{minted}
        \caption{}
    \end{subfigure}
    \hfill
    \begin{subfigure}[t]{0.31\textwidth}
\begin{minted}{text}
resp := fun (state, append):
    state := append (state, "a")
    h1    := <<cf1 ()>>
    state := h1     (state, append)
    state := append (state, "m")
    h2    := <<cf2 ()>>
    state := h2     (state, append)
    state := append (state, "z")
    state
stdout = resp (stdout, print)
\end{minted}
        \caption{}
    \end{subfigure}
    \hfill
    \begin{subfigure}[t]{0.27\textwidth}
\begin{minted}{text}

stdout := print (stdout, "a")
h1     := <<cf1 ()>>
stdout := h1    (stdout, print)
stdout := print (stdout, "m")
h2     := <<cf2 ()>>
stdout := h2    (stdout, print)
stdout := print (stdout, "z")


\end{minted}
        \caption{}
    \end{subfigure}
    \hfill\ 
    \caption{A term that loops over and prints a list of strings from an unresolved external call (a). The task $\etask{\kwtt{cf}}{\kwtt{"foo"}}$ might resolve to the partial string \kw{["a"] + h1 + ["m"] + h2 + ["z"]}, where the holes \kw{h1}, \kw{h2} are new unresolved tasks $\etask{\kw{cf1}}{()}$ and $\etask{\kw{cf2}}{()}$ (b). The expression could step then step to (c), where opportunistic evaluation could begin executing the \kw{"a"}, \kw{"m"}, and \kw{"z"} iterations of the loop.}
    \label{fig:streaming-example}
\end{figure}

\subsection{Enforcing Sequencing of External Calls}
\label{sec:epiclang:sequencing}

\calcname{} makes the foundational assumption that dependencies between external calls can be fully determined by their arguments, which licenses us to execute calls with no data-dependency in parallel. However, many operations, such as mutating filesystem state or printing to standard output are sensitive to the order in which they execute, and usually do not encode these dependencies in their arguments.

We address this by reifying
logical orderings as explicit data dependencies:
rather than a 
signature \kwtt{() := print (stdout, s)} for \kwtt{print} that takes a file descriptor \kw{stdout} and a string \kw{s}, its signature is \kwtt{stdout1 := print (stdout, s)}, allowing subsequent operations to take \kw{stdout1}, thus data-depending on the \kwtt{print} call and executing in order.

This approach can simulate an execution where all effects are totally ordered by threading a 
handle through every external call like the example below.
\begin{inlinebox}
    \begin{minipage}[c]{0.47\textwidth}
\begin{minted}{text}
   (thread, fd1) := open  (thread, <"foo.txt">)
   (thread, c)   := read  (thread, fd1)
   (thread, fd2) := open  (thread, <"bar.txt">)
    thread       := write (thread, fd2, c)
\end{minted}
    \end{minipage}
\end{inlinebox}
At the same time, it is not limited to sequential ordering but can be used to encode arbitrary partial order dependencies.
Fork/join concurrency can be encoded by adding external calls
external calls
\begin{inlinebox}
    \kw{(thread, thread2) := fork (thread)} \qquad\qquad and \qquad\qquad \kw{thread =: join (thread, thread2)}.
\end{inlinebox}
For example:
\begin{inlinebox}
    \begin{minipage}[c]{0.51\textwidth}
\begin{minted}{text}
1: (thread, fd1)     := open  (thread, <"foo.txt">)
2: (thread, thread2) := fork  (thread)
3: (thread2, c)      := read  (thread2, fd1)
4: (thread, fd2)     := open  (thread, <"bar.txt">)
5:  thread           := join  (thread, thread2)
6:  thread           := write (thread, fd2, c)
\end{minted}
    \end{minipage}
\end{inlinebox}
Statements $3$ and $4$ have no data-dependency, and so can execute in either order, but statement $1$ must execute first and $6$ must execute last.

\section{Implementation}
\label{sec:impl}

We develop an implementation of \toolname{} as a library in Python, supporting writing programs in Python syntax and marking specific Python functions, such as calls to machine learning libraries, as external calls.
In this section, we expand on two key aspects of our implementation: (1) task management, and (2) evaluation efficiency.

\paragraph{Task Management}
The set $C$ of primitives is the set of Python objects in the runtime---that is, syntactic $\eprim{c}$ statements contain references to arbitrary Python objects. Python objects referenced by ``call'' statements are assumed either to be ordinary Python functions or Python ``async coroutines''.
When evaluation encounters a ``dispatch'' to $c_f$ with argument $\bar{c}_x$, first it checks whether $c_f$ is a function or a coroutine. If it is a function, the runtime immediately executes $c_f$ with its arguments, and its result is placed into the term immediately, executing synchronously and skipping task creation and resolution altogether. This is useful for simple external calls like manipulation of Python strings that would not benefit from parallelization.
When instead $c_f$ is a coroutine, it is scheduled as a task with Python's async machinery and added to a global set of unresolved tasks. The task reference is then stored in a ``task'' statement (which is a slight deviation from the syntax in Section~\ref{sec:corecalculus:syntax}, for ease of implementation). There is no syntactic facility for users to define tasks---they can only be created by ``dispatch''.
This ensures that every task statement refers to some task that has been dispatched by the runtime.
When evaluation encounters a ``task'' statement, it checks whether the Python async task has a value. If so, the value is assumed to be an expression, and the task is resolved, replacing the ``task'' statement with the statements in the expression.
The evaluation loop proceeds by repeatedly running evaluation. If evaluation changes the term, evaluation continues. If the term does not change, it checks whether there are any outstanding tasks. If so, it sleeps until one of them has completed and continues evaluating. If the term does not change and there are no outstanding tasks, evaluation terminates.

\paragraph{Evaluation Efficiency}
While $p \Rightarrow p'$ can be implemented according to its presentation in Section~\ref{sec:corecalculus:eval}, this is too inefficient to be practical. In particular, doing a linear scan over statements to identify steppable statements, plus another linear scan to do variable substitution, causes a large blow-up in runtime as a function of term size. This becomes a problem when handling moderately sized Church-encoded strings (e.g. as returned from an LLM), since each character requires a separate statement.
Fortunately, there are efficient solutions: rather than representing the term as a list of statements with simple variable identifiers, we represent it as a graph with two kinds of vertices: one for statements and one for variables. We then maintain bidirectional mappings between statements and variables. Then, when replacing a statement or renaming a variable, identification and mutation of affected statements is efficient.

\section{Evaluation}
\label{sec:evaluation}
We evaluate our approach, focusing on answering the following key research questions (RQs):
\begin{itemize}
\item \textbf{RQ1:} How does \toolname{} compare to practitioner-style Python implementations without explicit parallelization or streaming?
\item \textbf{RQ2:} How does \toolname{} compare to existing frameworks for parallelization and streaming (Apache Flink, bash, PaSh, and SGLang)?
\item \textbf{RQ3:} What is the overhead of \toolname{} compared to hand-optimized parallel and streaming Rust  implementations?
\item \textbf{RQ4:} How much does opportunistic evaluation improve the performance of \toolname{} over call-by-value evaluation?
\end{itemize}

\subsection{Experimental Design}
\label{exp-design}

\paragraph{Benchmarks.} We evaluated \toolname{} using 
five benchmark programs that are representative of real-world LLM scripting tasks. 
\loopbench{} is the motivating example from Section~\ref{sec:motivating}.
\pipelinebench{} concatenates several independent calls to a language model and feeds them to an external text-to-speech tool in a streaming fashion.
\toolusebench{} implements a fact-checker via ``tool use''~\cite{schick2024toolformer} iteratively composing calls to a language model with a Wikipedia lookup tool that fetches information through Wikipedia's API.
\totbench{} is an implementation of Tree-of-Thoughts~\cite{tree-of-thoughts}, 
which treats reasoning as a graph search problem, where the node expansion and scoring functions are both implemented using LLMs.
\jsonbench{} extends simple language model calls using constrained decoding~\cite{poesia2022synchromesh}, enforcing that the LLM output satisfies a specification, in this case, valid JSON.
\jsonbench{} is implemented with a batching server for local LLM inference, combining simultaneous LLM call inputs into a large tensor, which PyTorch~\cite{pytorch} can compute more efficiently than it can each call individually. This deliberately trades off a small amount latency for running time, as the initial LLM call doesn't return until the whole initial batch is done, which is slower than evaluating sequentially for a single initial call. As a result, the latency numbers for \jsonbench{} are inverted from what one might expect: sequential evaluation has the lowest latency, while the manually parallelized implementation has the highest. Batch sizes can be tuned to alter the latency-running time tradeoff.

\paragraph{Bounding Variance for External APIs.}
External calls, particularly to LLMs, return nondeterministic outputs and therefore lead to high variance in execution time. 
To limit timing variance across different runs of the same benchmark and perform a fair comparison against different implementations, we implemented a record-replay system that has two modes of execution: (1) it records calls to external APIs (OpenAI and Wikipedia) and the precise time each of their output chunks is received, and (2) it replays these recordings as if they were live calls.
We used this system to record three runs of each benchmark, that we then replayed to compare all implementations.
The results of the record-replay system are valid assuming that the timing of external APIs does not depend on the number of calls made to them in parallel, which is a reasonable assumption given the high scalability of these APIs and the small number of calls we make in our benchmarks.

\paragraph{Measurements.}
We use two metrics to evaluate the performance of different configurations: ``latency'' is defined as the time between the start of execution and the first byte of output, while ``running time'' is the time until the output is complete. 
We report latency and running time averaged over three runs of each program.

\paragraph{Setup.}
\jsonbench{} experiments were run on a server with two Intel Xeon Gold 6148 2.40GHz 20-core/40-thread CPUs and 754 GiB of RAM; all others were run on a laptop with an Intel Core i7-8550U CPU @ 1.80GHz and 16 GiB of RAM. For remote LLM calls, we used OpenAI's GPT-3.5-Turbo. For \jsonbench{}'s local LLM, we used GPT-Neo-125M~\cite{gpt-neo} via HuggingFace Transformers~\cite{hf-transformers}. For Apache Flink~\cite{flink}, we used version 1.18.0, with Java 1.17. For PaSh~\cite{pash,pash-osdi-2022}, we used the osdi22-ae branch on GitHub, with width 8. 
For GNU bash, we used version 5.1.16(1)-release. For SGLang, we used version 0.4.6.post5.

\begin{table}
\caption{Latency and running time (seconds) for different implementations of each benchmark program. \toolname{} is our approach. ``--'' means that the framework or language cannot express the program. }
\footnotesize
\begin{tabular}{l|lrrrrrrrrr}
\toprule
& \textbf{Program} & \textbf{Python} & \textbf{bash} & \textbf{PaSh} & \textbf{Flink} & \textbf{SGL-V} & \textbf{SGL-F} & \textbf{CBV} & \textbf{\toolname{}} & \textbf{Manual} \\
\midrule
\textbf{Latency}
& \loopbench{}     &  $7.37$ &                    $1.41$ &                     $7.79$ &                    $4.13$* & $7.37$                 & $7.37$                 & $1.31$ & $1.32$ & $1.27$ \\
& \pipelinebench{} &  $8.15$ &                    $0.73$ &                     $1.26$ & \multicolumn{1}{c}{--}     & $8.14$                 & $2.81$                 & $2.95$ & $0.64$ & $0.61$ \\
& \toolusebench{}  &  $3.34$ &                    $1.37$ &                    $14.78$ & \multicolumn{1}{c}{--}     & \multicolumn{1}{c}{--} & \multicolumn{1}{c}{--} & $3.35$ & $0.57$ & $0.56$ \\
& \totbench{}      &  $6.94$ &                    $8.38$ &                     $3.81$ & \multicolumn{1}{c}{--}     & $6.81$                 & $2.62$                 & $6.92$ & $2.63$ & $2.55$ \\
& \jsonbench{}     &  $0.06$ & \multicolumn{1}{c}{--}    & \multicolumn{1}{c}{--}     & \multicolumn{1}{c}{--}     & \multicolumn{1}{c}{--} & \multicolumn{1}{c}{--} & $0.06$ & $0.17$ & $0.39$ \\
\midrule
\textbf{Running}
& \loopbench{}     & $50.63$ &                    $46.27$ &                    $12.03$ &                    $9.23$* &                $50.58$ &                 $8.21$ & $49.90$ &  $9.23$ &  $8.05$ \\
\textbf{Time}
& \pipelinebench{} &  $8.15$ &                     $8.51$ &                     $3.03$ & \multicolumn{1}{c}{--}     &                 $8.14$ &                 $2.81$ &  $3.54$ &  $3.30$ &  $2.79$ \\
& \toolusebench{}  & $44.53$ &                    $67.13$ &                    $19.46$ & \multicolumn{1}{c}{--}     & \multicolumn{1}{c}{--} & \multicolumn{1}{c}{--} & $45.33$ & $10.47$ & $10.34$ \\
& \totbench{}      & $49.29$ &                    $66.24$ &                    $19.19$ & \multicolumn{1}{c}{--}     &                $47.98$ &                 $8.13$ & $49.23$ &  $7.99$ &  $7.29$ \\
& \jsonbench{}     & $23.97$ & \multicolumn{1}{c}{--}     & \multicolumn{1}{c}{--}     & \multicolumn{1}{c}{--}     & \multicolumn{1}{c}{--} & \multicolumn{1}{c}{--} & $24.63$ & $15.57$ & $15.30$ \\
\bottomrule
\end{tabular}
\vspace{0.25em} \\
{\scriptsize \ \ *There are two Flink implementations, one with better latency, and one with better running time; the best of each is reported.}\hfill
\label{fig:tab-timing-new}
\end{table}

\subsection{RQ 1: Comparison to a Practitioner-Style Python Implementation}
\label{sec:rq1}
Python is the \emph{de facto} standard language for machine learning, and a quick survey of notable LLM script implementations (Tree-of-Thoughts~\cite{tot-impl}, ReAct~\cite{react-impl}, Toolformer~\cite{toolformer-impl}, Synchromesh~\cite{synchromesh-impl}) shows that practitioners often use neither parallelization nor streaming primitives in Python, opting instead for sequential, blocking code. We implemented each benchmark program in Python, without parallelization, and using blocking calls to LLMs. \toolname{} achieves better latency and running time across the board, by a factor of $2.6\times$ to $12.7\times$ (latency) and $1.5\times$ to $6.2\times$ (running time), with the exception of \jsonbench{}'s latency, because the Python implementation does not exploit batching.
For \totbench, we also compared to the authors' original implementation~\cite{tot-impl}, finding that it has similar performance as our Python implementation: $56.16 \pm 10.57$ (across 9 runs) for the authors', versus $49.29 \pm 4.88$ for ours; 
the discrepancies are due to the fact that our prompts are slightly different to maintain consistency across all different implementations (Shell, PaSh, \toolname{}, and Manual).
In comparison, the running time of \toolname{} is $7.99 \pm 1.00$.
Thus, \toolname{} yields large speedups in both latency and running time over the \emph{de facto} standard approach without explicit parallelization or streaming.

\subsection{RQ 2: Comparison to Existing Frameworks for Parallelization and Streaming}

\paragraph{Baselines}
We compare the performance (latency and running time) of \toolname{} against four widely used frameworks
for parallelization and/or streaming (bash, PaSh~\cite{pash}, Apache Flink~\cite{flink}, and SGLang~\cite{sglang}).
Shell scripts are widely used to compose commands and tools to perform complex tasks, and offers some streaming and parallelization out of the box through pipelining.
PaSh~\cite{pash,pash-osdi-2022} is a state-of-the-art automatic parallelization system for the shell that does not require any modification on the scripts.
The \jsonbench{} benchmark cannot be implemented directly as a shell script (bash, PaSh) because it requires tight integration with a Python LLM library with non-trivial data structures.
Apache Flink~\cite{flink} is a state-of-the-art distributed stream processing framework that can parallelize and stream applications as long as they are written using Flink's API; only the \loopbench{} benchmark can be implemented in Flink in a way that exposes some parallelization.
SGLang~\cite{sglang} is a framework for composing LLM calls, and offers a ``frontend'' language as an embedded DSL in Python for writing LLM scripts. SGLang cannot implement the \toolusebench{} and \jsonbench{} benchmarks, as it does not support tool calls or logit access.
Table~\ref{fig:tab-timing-new} shows the results for all benchmarks and configurations.

\paragraph{Bash}
For \loopbench{}, \toolusebench{}, and \pipelinebench{} the shell achieves latency that is comparable to \toolname{} ($1.41$ vs $1.32$, $1.37$ vs $0.57$, $0.73$ vs $0.64$, resp.). For \totbench{}, \toolname{} achieves lower latency ($3.2\times$, resp.). In all cases, \toolname{} has much lower running time (between $2.6\times$ and $8.3\times$).

The shell is designed for line-by-line streaming through the use of Unix pipes and therefore can achieve similar latency to \toolname{} for many use-cases. 
However, for \totbench{}, opportunities for parallelization (independent LLM calls to rank states) occur before the first program output (which happens after sorting and therefore cannot produce output until having received all its input)---by parallelizing calls before the sort, \toolname{} can achieve much better latency.

\paragraph{PaSh}
\toolname{} achieves better latency than PaSh on all benchmarks ($5.9\times$ for \loopbench{}, $2.0\times$ for \pipelinebench{}, $26.2\times$ for \toolusebench{}, $1.4\times$ for \totbench{}).
PaSh is slightly faster on \pipelinebench{} ($3.03$ vs $3.30$), and slower on the others ($1.3\times$ for \loopbench{}, $1.9\times$ for \toolusebench{}, $2.4\times$ for \totbench{}).

While PaSh supports streaming similarly to the shell, it is primarily designed for higher throughput on large-batch workloads that take minutes or hours to execute. To achieve this, it processes data in microbatches (1~MB), which improves throughput for large data sizes but increases latency when data is small (in the order of hundreds and thousands of characters for these experiments).
\toolname{} achieves better total running time for \toolusebench{} and \totbench{} compared to PaSh because it can exploit more fine-grained parallelization opportunities, for example, for \toolusebench{}, PaSh can only parallelize across different facts, while \toolname{} also parallelizes tool calls for each fact.

\paragraph{Flink}
Flink is only able to achieve nontrivial streaming and parallelization in the  \loopbench{} benchmark, through two structurally different implementations with different tradeoffs (streams-of-lists, lists-of-streams). The stream-of-lists implementation achieves the same total time as \toolname{} ($9.23$s), but its latency is worse ($6.94$s vs $1.32$s). The list-of-streams implementation has $9.61$s total time and better latency, however still worse than \toolname{} ($4.13$s vs $1.32$s).

Flink uses the ``dataflow model'' of streaming computation~\cite{akidau2015dataflow}: programs are dataflow graphs, where nodes represent operators and edges represent streams of values. For Flink, these nodes are Java functions, optionally with state, and values in streams are Java objects. 
Flink's model is capable of exposing parallelization across nodes in the graph and for shards of the same node if its state can be partitioned.
Crucially, Flink does not support streams of streams, making it impossible to ``fully'' stream the output of \loopbench{}, which invokes a streaming LLM that returns a stream of cities, where another streaming LLM call for each city gets excursions in that city.
Flink can have either: a stream of lists of characters (streams-of-lists) being able to stream output across cities but not stream the output of the excursions of a single city;
or a list of streams of characters (lists-of-streams) forfeiting streaming of the cities. In either case, one LLM call must be blocking, leading to high latency, whereas \toolname{} can handle streams-of-streams.
\pipelinebench{} requires spawning external processes, which is not supported by Flink since it would complicate fault tolerance in the context of distributed execution.
\toolusebench{}, \totbench{}, and \jsonbench{} are inexpressible (except in a trivial way) because they have top-level do-until loops: such a control flow operator in Flink has to be implemented as a single node sequential Java implementation.

\paragraph{SGLang} The SGLang frontend language allows a limited form of automatic parallelization: LLM calls return futures, which are then implicitly waited for when used. Thus we implemented our programs in two ways: \emph{verbatim} (SGL-V), where the sequential Python implementation was ported with minimal changes, which essentially results in a sequential implementation, and \emph{futures} (SGL-F), where programs are rewritten to achieve parallelism in the presence of futures, essentially resulting in the manual parallel implementation. The SGLang frontend does not support streaming.

For \totbench{}, the latency and running time of ``verbatim'' is similar to the other sequential approaches, ``Python'' and ``CBV'', while the latency and running time of ``futures'' is similar to the other parallel approaches, ``Manual'' and ``Opal''. Likewise for \loopbench{} and \pipelinebench{}, except their ``futures'' versions have poor latency due to SGLang's lack of streaming.

\subsection{RQ 3: Comparison to a Hand-Optimized Implementation}
We compare \toolname{} against manually-optimized implementations in Rust (``Manual''), 
using lightweight threads and queues to communicate partial results between them and achieve the maximum possible parallelism and streaming.
\toolname{} incurs low overheads with respect to latency ($1.4\%$ to $4.8\%$) and running time ($1.3\%$ to $18.5\%$) compared to Manual. Manual has higher latency on \jsonbench{}, due to batching (see Section~\ref{sec:rq1}).
\toolname{} can automatically exploit all streaming and parallelization opportunities, and the small differences in performance are due to \toolname{}'s interpreter overhead. The performance differences are so small because these scripts are almost entirely bottlenecked by their external calls.
The Manual implementations are significantly more complex, requiring spawning of asynchronous tasks communicating via queues.
In contrast, \toolname{}'s performance benefits come without sacrificing programmability: the developer can describe their computation without any concern for parallelism and streaming, obtaining these benefits for free.

\subsection{RQ 4: Comparison to an Ablation Using Sequential Evaluation}

We evaluate the benefits of opportunistic evaluation by performing an ablation study using two configurations of \toolname{}, one with opportunistic evaluation strategy (``\toolname{}'') and one with a call-by-value (CBV) evaluation strategy (``CBV''). 
Opportunistic evaluation yields much better running time than CBV on all programs (from $1.6\times$ to $6.2\times$) except \pipelinebench{}, and better latency (from $2.6\times$ to $5.9\times$) on all programs except \loopbench{}, where they are indistinguishable, and \jsonbench{}, where batching of LLM inference trades running time off slightly against latency (see. Section~\ref{sec:rq1}).
\toolname{} has better performance because CBV evaluation does not exploit parallelization, executing function calls in sequential order, whereas opportunistic evaluation does fair scheduling of all function calls.
The latency of \loopbench{} and running time of \pipelinebench{} are exceptions, in that \toolname{}-CBV gets the benefits of \toolname{} (as compared to Python and other baselines). This is because, even with CBV evaluation, streaming ChatGPT call immediately return lists with holes in them, and the holes are filled in in the background as characters are received. CBV only blocks when the missing list values are needed, which happens not to harm latency in \loopbench{} and running time in \pipelinebench{}. CBV gets better latency on \jsonbench{}, because it does not exploit batching.

\section{Related Work}

\paragraph{DSLs for LLM Programming.}
Recent work has developed several Python DSLs for different aspects of LLM programming. Most of these techniques are orthogonal to our language and execution model and could be integrated in future work.
LangChain~\cite{langchain} allows users to build ``chains'' of LLMs and other components, including arbitrary functions, analogous to pipelines in shell scripts. However, it is unable to express more complex use-cases, such as Tree-of-Thoughts~\cite{tree-of-thoughts}, which was recently added to LangChain as a primitive.
LlamaIndex~\cite{llamaindex} is a library to help users structure data, such as internal company documents, for provision to LLMs for use in tasks like retrieval-augmented generation~\cite{lewis2020retrieval}, a form of tool-use.
Guidance~\cite{guidance}, LMQL~\cite{lmql}, and SGLang~\cite{sglang}'s frontend are Python embedded DSLs that facilitate advanced prompting, interleaving interactions with a LLMs with arbitrary Python constructs. However, they are unable to express more complex use-cases. In addition to its frontend, SGLang has a model-serving backend, which is a drop-in replacement for OpenAI and thus can be used with \toolname{} with no change. SGLang's frontend supports explicit fork/join parallelism, and has ``compiler'' and ``interpreter'' modes. The compiler mode builds a data-flow graph but does not support control-flow. SGLang's interpreter mode, Guidance, and LMQL all rely on Python for control-flow, and so do not understand and cannot parallelize control-flow constructs.
Galois~\cite{saeed2023querying} proposes treating LLMs as databases and querying them with SQL. DSPy~\cite{dspy} is a language for specifying prompting strategies, along with a framework for fine-tuning models and learning prompts to optimize a dataset performance metric.

\paragraph{Parallel Computation in Haskell}
Parallel Haskell~\cite{marlow2013parallel} leverages purity for efficient parallelization. However, it requires manual user annotations for ``evaluation strategies'', and it requires that the code be truly pure. Effects, even commuting ones, must be serialized via the IOMonad~\cite{jones2001tackling}. Concurrent Haskell~\cite{concurrent-haskell} supports effects, but requires the user to manually fork and synchronize processes. Our approach is spiritually similar to using \kw{unsafePerformIO} in Haskell, however, because Haskell is lazy, it may drop external calls. Further, since our external calls are impure, many compiler optimizations become invalid. For instance, common sub-expression elimination could, for pure code, rewrite \kw{x := coin(); y := coin(); (x, y)} to \kw{x := coin(); (x, x)}. However, this is clearly unsound for nondeterministic \kw{coin}, reducing two independent coin flips to a single one. Using unsafePerformIO for such nondeterministic effects requires disabling various compiler optimizations (e.g. common sub-expression elimination, function inlining, and let-floating).

There is also a Haskell library called Haxl~\cite{haxl}, which takes a similar approach to ours, applied to parallel data access from remote, read-only APIs. It uses a monadic domain-specific language to identify a set of requests that can be made in parallel. However, computation rigidly alternates between executing the user program to collect parallel requests and waiting for all in-flight requests to return. They leave the sort of interleaving that \toolname{} does for future work. Additionally, they assume that the remote APIs are pure, which is not true for LLM scripting.

\paragraph{Automatic Parallelization for Shell Scripts}
Recent work focusing on Unix shell scripts achieves performance benefits through automatic parallelization, distribution, and out-of-order execution~\cite{pash,pash-osdi-2022,dish2023nsdi,hs2023hotos}.
While these systems achieve significant performance benefits, they focus on parallelization at a coarser granularity, 
    that of external commands that run on their own processes,
    and are not able to expose the fine-grained parallelization and streaming possible with \toolname{}.

\paragraph{Languages for Parallel Computation}
There are numerous languages that support explicit constructs for parallelism, including NESL~\cite{nesl}, Cilk~\cite{cilk}, and Chapel~\cite{chapel}, as well as C++ (via the Parallel library), Python (via async/await), and other mainstream languages. However, parallelism in all of these languages is not automatic, requiring code changes. By contrast, parallelism in \toolname{} is completely automatic, which is easier for non-expert users of scripting languages.

\paragraph{Dataflow Programming} Dataflow languages~\cite{Karp1966,Adams1968,Karp1969,Adams1969,Rodriguez1969,dennis1974dataflow,davis1982data}, including VAL~\cite{mcgraw1982val}, SISAL~\cite{sisal}, Id~\cite{arvind1992id}, pH~\cite{maessen1995semantics}, and others~\cite{dataflow-history} are known to be highly parallelizable. However, these languages are strict (except Id and pH, discussed below). Strictness loses many opportunities for parallelism, since e.g. the term \kw{(fun x: x + (llm "bar")) (llm "foo")}, can't begin to evaluate \kw{(llm "bar")} until after the (slow) call to \kw{(llm "foo")} is completed.

\paragraph{Automatically Parallel Lambda Calculi}
The most closely related work is $\lambda_S$~\cite{maessen1996s}, a core calculus for the languages Id~\cite{arvind1992id} and pH~\cite{maessen1995semantics}. It is automatically parallel, with no explicit parallelism constructs. Like $\lambda^O$, it employs a lenient evaluation strategy. Being a cyclic lambda calculus~\cite{cyclic-lambda}, it is not confluent, unlike $\lambda^O$. It has a deterministic subset, $\lambda_C$ which possesses the weaker notion of ``print-confluence''. $\lambda_S$ adds notions of mutable heap memory and synchronization barriers on top of $\lambda_C$, which lead to nondeterminism, and deprives $\lambda_S$ of print-confluence. In contrast to $\lambda^O$, $\lambda_S$ does not have a notion of external calls, and is not confluent in the presence of external-call nondeterminism. We are not aware of work applying pH or Id to scripting or machine learning applications. Our ``tasks'' are very similar to ``futures'' in Multilisp~\cite{multilisp}, where are resolved implicitly. Note that these are distinct from ``futures'' in mainstream languages like Python, which must be explicitly awaited.

\paragraph{Lenient Evaluation}
Though uncommon, there are evaluation strategies that are neither lazy nor strict. They may do more reduction than lazy evaluation, which does strictly the smallest number of reduction steps to reach a normal form, if it exists. This is desirable because it allows parallel evaluation that is not possible with lazy evaluation. A \emph{lenient} evaluation~\cite{traub1988sequential,TREMBLAY200027,TREMBLAY200043,cyclic-lambda} strategy is one that is neither lazy nor strict, but also must still reach a normal form if it exists. Our opportunistic evaluation strategy is lenient, but it is more stringent: lenient evaluation may discard statements, whereas opportunistic evaluation must step every statement. We formulate and prove this property.

\paragraph{Linear Logic}
Our language draws inspiration from Ideograph~\cite{ideograph}, which represents programs directly as linear, higher-order data-flow graphs and has a similar parallel evaluation strategy. Our ``step-call'' rule is closely related to sequent calculi for linear logic~\cite{linearlogic}, in particularly the ``adsorption'' rule from the Dyadic System $\Sigma_2$~\cite{andreoli1992logic}.

\section{Conclusion and Future Work}
In this paper, we presented an approach for automatic parallelization and streaming of scripts that make effectful external calls, using a novel \emph{opportunistic} evaluation strategy. Our work demonstrates the feasibility of creating programming languages that abstract over parallelization and streaming the way current high-level languages abstract over memory management. Important directions for future work include:
\begin{itemize}
  \item \emph{Expanding the scope of opportunistic evaluation.} \toolname{} works by executing parts of the program much earlier than they ordinarily would be. However, there are still cases where something could be evaluated early, but \toolname{} does not. For example, \kwtt{f(if b then x else y)} and \kwtt{if b then f(x) else f(y)} are intuitively equivalent, but \toolname{} only opportunistically evaluates \kw{f} in the former. In the latter, it must wait for \kwtt{b} to be a value in order to know which branch to execute. Characterizing what it means for a system to be ``more opportunistic'' and expanding the scope of what can be opportunistically evaluated would further improve performance.
  \item \emph{Improving frontend support for common language features.} Currently, \toolname{} only supports Python functions and tuples, requiring other language constructs to be encoded by the programmer (e.g. \kw{for} loops must be encoded as functional \kwtt{fold}s). Compiling a larger subset of Python down to \calcname{} would be challenging, but would significantly improve the usability for programmers.
  \item \emph{Ensuring correctness of external call sequencing.} Sequencing of external calls is achieved in \toolname{} via data-dependencies (Section~\ref{sec:epiclang:sequencing}), but the correctness of such sequencing is not checked. For example, print statements should be sequential, but in \kwtt{_ = print(stdout1, <"a">);} \kwtt{stdout2 = print(stdout1, <"a">)},  the print order is nondeterministic, since \kw{stdout1} is used twice. Adding types, and in particular a notion of linearity~\cite{linearlogic}, could improve \toolname{}'s handling of sequencing.
  \item \emph{Closing the performance gap vs manual implementations.} Though \toolname{} yields large speedups over automatically parallel and streaming baselines, it still has noticeable overhead when compared to manual parallelization and streaming in low-level languages. One promising area for improvement is the handling of streaming: Church encodings implicitly use an inefficient linked-list implementation of strings, which incurs significant overhead; developing a notion of partial values and streaming for primitives could significantly improve performance.
\end{itemize}
Addressing these challenges will significantly improve the usability of \toolname{} in practice.

\begin{acks}
We thank the anonymous reviewers for their helpful feedback.
This work was supported in part by
NSF Awards CCF-1917852, CCF-2247088, and CCF-2338777,
Amazon Research Award Fall 2023, and
Amazon/ASSET Gift for Research in Trustworthy AI.
Any opinions, findings, and conclusions or recommendations expressed in this material are those of the authors and do not necessarily reflect the views of funding entities.
\end{acks}

\section*{Data Availability Statement}
The implementation of \toolname{} \href{https://doi.org/10.5281/zenodo.16929280}{is available}~\cite{artifact} as an installable Python package and a Dockerfile. It supports writing and executing novel programs, as well as producing the numerical results in this paper. For reproducibility, we include recorded network exchanges with OpenAI (request, response, and timing) used in our experiments.

\appendix
\section{Well-Formedness}
\label{appendix:wf}

\begin{figure}[H]
    \centering
    \small
    \begin{mathpar}
        \inferrule{
            x \notin \Gamma \\
            \Gamma \vdash o \\
            \Gamma, x \vdash e \\
        }{
            \Gamma \vdash \sstmt{x}{o} \ssep e
        }
        \text{WF-e-stmt} \and
        \inferrule{
            x \in \Gamma
        }{
            \Gamma \vdash x
        }
        \text{WF-e-var} \and
        \inferrule{
            x \in \Gamma
        }{
            \Gamma \vdash x
        }
        \text{WF-o-var} \and
        \inferrule{
            x \notin \Gamma \\
            \Gamma, x \vdash e \\
        }{
            \Gamma \vdash \efun{x}{e}
        }
        \text{WF-o-fun} \and
        \inferrule{
            f, x \in \Gamma \\
        }{
            \Gamma \vdash \eapp{f}{x}
        }
        \text{WF-o-call} \and
        \inferrule{
            x_1, \ldots, x_n \in \Gamma \\
        }{
            \Gamma \vdash \etuple{x_1, \ldots, x_n}
        }
        \text{WF-o-tuple} \and
        \inferrule{
            x \in \Gamma \\
        }{
            \Gamma \vdash \eproj{i}{x}
        }
        \text{WF-o-proj} \and
        \inferrule{
        }{
            \Gamma \vdash \eprim{c}
        }
        \text{WF-o-prim} \and
        \inferrule{
        }{
            \Gamma \vdash \etask{c_f}{\bar{c}_x}
        }
        \text{WF-o-task}
    \end{mathpar}
    \caption*{The rules for syntactic well-formedness of terms in \calcname{}.}
    \label{fig:corecalculus:syntax:wf}
\end{figure}

\bibliography{refs}


\begin{thebibliography}{60}


\ifx \showCODEN    \undefined \def \showCODEN     #1{\unskip}     \fi
\ifx \showISBNx    \undefined \def \showISBNx     #1{\unskip}     \fi
\ifx \showISBNxiii \undefined \def \showISBNxiii  #1{\unskip}     \fi
\ifx \showISSN     \undefined \def \showISSN      #1{\unskip}     \fi
\ifx \showLCCN     \undefined \def \showLCCN      #1{\unskip}     \fi
\ifx \shownote     \undefined \def \shownote      #1{#1}          \fi
\ifx \showarticletitle \undefined \def \showarticletitle #1{#1}   \fi
\ifx \showURL      \undefined \def \showURL       {\relax}        \fi
\providecommand\bibfield[2]{#2}
\providecommand\bibinfo[2]{#2}
\providecommand\natexlab[1]{#1}
\providecommand\showeprint[2][]{arXiv:#2}

\bibitem[gui(2022)]%
        {guidance}
 \bibinfo{year}{2022}\natexlab{}.
\newblock \bibinfo{booktitle}{\emph{{Guidance}}}.
\newblock
\urldef\tempurl%
\url{https://github.com/guidance-ai/guidance}
\showURL{%
\tempurl}


\bibitem[tot(2023)]%
        {tot-impl}
 \bibinfo{year}{2023}\natexlab{}.
\newblock \bibinfo{booktitle}{\emph{{Official Repo of Tree of Thoughts}}}.
\newblock
\urldef\tempurl%
\url{https://github.com/princeton-nlp/tree-of-thought-llm}
\showURL{%
\tempurl}


\bibitem[Adams(1968)]%
        {Adams1968}
\bibfield{author}{\bibinfo{person}{Duane~A. Adams}.} \bibinfo{year}{1968}\natexlab{}.
\newblock \bibinfo{booktitle}{\emph{{A Computation Model with Data-Sequenced Control}}}.
\newblock \bibinfo{type}{{T}echnical {R}eport}. \bibinfo{institution}{Stanford University}.
\newblock
\newblock
\shownote{Technical Report CGTM 45}.


\bibitem[Adams(1969)]%
        {Adams1969}
\bibfield{author}{\bibinfo{person}{Duane~A. Adams}.} \bibinfo{year}{1969}\natexlab{}.
\newblock \emph{\bibinfo{title}{{A Computation Model with Data Flow Sequencing}}}.
\newblock \bibinfo{thesistype}{Ph.\,D. Dissertation}.
\newblock


\bibitem[Akidau et~al\mbox{.}(2015)]%
        {akidau2015dataflow}
\bibfield{author}{\bibinfo{person}{Tyler Akidau}, \bibinfo{person}{Robert Bradshaw}, \bibinfo{person}{Craig Chambers}, \bibinfo{person}{Slava Chernyak}, \bibinfo{person}{Rafael~J Fern{\'a}ndez-Moctezuma}, \bibinfo{person}{Reuven Lax}, \bibinfo{person}{Sam McVeety}, \bibinfo{person}{Daniel Mills}, \bibinfo{person}{Frances Perry}, \bibinfo{person}{Eric Schmidt}, {et~al\mbox{.}}} \bibinfo{year}{2015}\natexlab{}.
\newblock \showarticletitle{The dataflow model: a practical approach to balancing correctness, latency, and cost in massive-scale, unbounded, out-of-order data processing}.
\newblock \bibinfo{journal}{\emph{Proceedings of the VLDB Endowment}} \bibinfo{volume}{8}, \bibinfo{number}{12} (\bibinfo{year}{2015}), \bibinfo{pages}{1792--1803}.
\newblock


\bibitem[Andreoli(1992)]%
        {andreoli1992logic}
\bibfield{author}{\bibinfo{person}{Jean-Marc Andreoli}.} \bibinfo{year}{1992}\natexlab{}.
\newblock \showarticletitle{Logic programming with focusing proofs in linear logic}.
\newblock \bibinfo{journal}{\emph{Journal of logic and computation}} \bibinfo{volume}{2}, \bibinfo{number}{3} (\bibinfo{year}{1992}), \bibinfo{pages}{297--347}.
\newblock


\bibitem[Apostolakis et~al\mbox{.}(2020)]%
        {apostolakis2020perspective}
\bibfield{author}{\bibinfo{person}{Sotiris Apostolakis}, \bibinfo{person}{Ziyang Xu}, \bibinfo{person}{Greg Chan}, \bibinfo{person}{Simone Campanoni}, {and} \bibinfo{person}{David~I August}.} \bibinfo{year}{2020}\natexlab{}.
\newblock \showarticletitle{Perspective: A sensible approach to speculative automatic parallelization}. In \bibinfo{booktitle}{\emph{Proceedings of the Twenty-Fifth International Conference on Architectural Support for Programming Languages and Operating Systems}}. \bibinfo{pages}{351--367}.
\newblock


\bibitem[Ariola and Blom(1997)]%
        {cyclic-lambda}
\bibfield{author}{\bibinfo{person}{Zena~M. Ariola} {and} \bibinfo{person}{Stefan Blom}.} \bibinfo{year}{1997}\natexlab{}.
\newblock \showarticletitle{Cyclic lambda calculi}. In \bibinfo{booktitle}{\emph{Theoretical Aspects of Computer Software}}, \bibfield{editor}{\bibinfo{person}{Mart{\'i}n Abadi} {and} \bibinfo{person}{Takayasu Ito}} (Eds.). \bibinfo{publisher}{Springer Berlin Heidelberg}, \bibinfo{address}{Berlin, Heidelberg}, \bibinfo{pages}{77--106}.
\newblock
\showISBNx{978-3-540-69530-1}


\bibitem[Arvind(1992)]%
        {arvind1992id}
\bibfield{author}{\bibinfo{person}{Rishiyur S~Nikhil Arvind}.} \bibinfo{year}{1992}\natexlab{}.
\newblock \showarticletitle{Id: a language with implicit parallelism}.
\newblock In \bibinfo{booktitle}{\emph{A Comparative Study of Parallel Programming Languages}}. \bibinfo{publisher}{Elsevier}, \bibinfo{pages}{169--215}.
\newblock


\bibitem[Beurer-Kellner et~al\mbox{.}(2023)]%
        {lmql}
\bibfield{author}{\bibinfo{person}{Luca Beurer-Kellner}, \bibinfo{person}{Marc Fischer}, {and} \bibinfo{person}{Martin Vechev}.} \bibinfo{year}{2023}\natexlab{}.
\newblock \showarticletitle{Prompting Is Programming: A Query Language for Large Language Models}.
\newblock \bibinfo{journal}{\emph{Proc. ACM Program. Lang.}} \bibinfo{volume}{7}, \bibinfo{number}{PLDI}, Article \bibinfo{articleno}{186} (\bibinfo{date}{jun} \bibinfo{year}{2023}), \bibinfo{numpages}{24}~pages.
\newblock
\href{https://doi.org/10.1145/3591300}{doi:\nolinkurl{10.1145/3591300}}


\bibitem[Black et~al\mbox{.}(2021)]%
        {gpt-neo}
\bibfield{author}{\bibinfo{person}{Sid Black}, \bibinfo{person}{Gao Leo}, \bibinfo{person}{Phil Wang}, \bibinfo{person}{Connor Leahy}, {and} \bibinfo{person}{Stella Biderman}.} \bibinfo{year}{2021}\natexlab{}.
\newblock \bibinfo{booktitle}{\emph{{GPT-Neo: Large Scale Autoregressive Language Modeling with Mesh-Tensorflow}}}.
\newblock
\href{https://doi.org/10.5281/zenodo.5297715}{doi:\nolinkurl{10.5281/zenodo.5297715}}
\newblock
\shownote{{If you use this software, please cite it using these metadata.}}.


\bibitem[Blelloch and Greiner(1995)]%
        {blelloch95parallelism}
\bibfield{author}{\bibinfo{person}{Guy Blelloch} {and} \bibinfo{person}{John Greiner}.} \bibinfo{year}{1995}\natexlab{}.
\newblock \showarticletitle{Parallelism in Sequential Functional Languages}. In \bibinfo{booktitle}{\emph{Proceedings of the Seventh International Conference on Functional Programming Languages and Computer Architecture}}.
\newblock


\bibitem[Blelloch(1995)]%
        {nesl}
\bibfield{author}{\bibinfo{person}{Guy~E Blelloch}.} \bibinfo{year}{1995}\natexlab{}.
\newblock \bibinfo{booktitle}{\emph{NESL: A nested data-parallel language (version 3.1)}}.
\newblock \bibinfo{publisher}{Citeseer}.
\newblock


\bibitem[Blelloch et~al\mbox{.}(1999)]%
        {blelloch99scheduling}
\bibfield{author}{\bibinfo{person}{Guy~E Blelloch}, \bibinfo{person}{Phillip~B Gibbons}, {and} \bibinfo{person}{Yossi Matias}.} \bibinfo{year}{1999}\natexlab{}.
\newblock \showarticletitle{Provably efficient scheduling for languages with fine-grained parallelism}.
\newblock \bibinfo{journal}{\emph{J. ACM}} \bibinfo{volume}{46}, \bibinfo{number}{2} (\bibinfo{year}{1999}), \bibinfo{pages}{281--321}.
\newblock


\bibitem[B{\"o}hm and Berarducci(1985)]%
        {churchencoding}
\bibfield{author}{\bibinfo{person}{Corrado B{\"o}hm} {and} \bibinfo{person}{Alessandro Berarducci}.} \bibinfo{year}{1985}\natexlab{}.
\newblock \showarticletitle{Automatic synthesis of typed $\lambda$-programs on term algebras}.
\newblock \bibinfo{journal}{\emph{Theoretical Computer Science}}  \bibinfo{volume}{39} (\bibinfo{year}{1985}), \bibinfo{pages}{135--154}.
\newblock


\bibitem[Carbone et~al\mbox{.}(2015)]%
        {flink}
\bibfield{author}{\bibinfo{person}{Paris Carbone}, \bibinfo{person}{Asterios Katsifodimos}, \bibinfo{person}{Stephan Ewen}, \bibinfo{person}{Volker Markl}, \bibinfo{person}{Seif Haridi}, {and} \bibinfo{person}{Kostas Tzoumas}.} \bibinfo{year}{2015}\natexlab{}.
\newblock \showarticletitle{Apache flink: Stream and batch processing in a single engine}.
\newblock \bibinfo{journal}{\emph{The Bulletin of the Technical Committee on Data Engineering}} \bibinfo{volume}{38}, \bibinfo{number}{4} (\bibinfo{year}{2015}).
\newblock


\bibitem[Chamberlain et~al\mbox{.}(2007)]%
        {chapel}
\bibfield{author}{\bibinfo{person}{Bradford~L Chamberlain}, \bibinfo{person}{David Callahan}, {and} \bibinfo{person}{Hans~P Zima}.} \bibinfo{year}{2007}\natexlab{}.
\newblock \showarticletitle{Parallel programmability and the chapel language}.
\newblock \bibinfo{journal}{\emph{The International Journal of High Performance Computing Applications}} \bibinfo{volume}{21}, \bibinfo{number}{3} (\bibinfo{year}{2007}), \bibinfo{pages}{291--312}.
\newblock


\bibitem[Chase(2022)]%
        {langchain}
\bibfield{author}{\bibinfo{person}{Harrison Chase}.} \bibinfo{year}{2022}\natexlab{}.
\newblock \bibinfo{booktitle}{\emph{{LangChain}}}.
\newblock
\urldef\tempurl%
\url{https://github.com/langchain-ai/langchain}
\showURL{%
\tempurl}


\bibitem[Davis and Keller(1982)]%
        {davis1982data}
\bibfield{author}{\bibinfo{person}{Alan~L. Davis} {and} \bibinfo{person}{Robert~M. Keller}.} \bibinfo{year}{1982}\natexlab{}.
\newblock \showarticletitle{Data Flow Program Graphs}.
\newblock \bibinfo{journal}{\emph{Computer}} \bibinfo{volume}{15}, \bibinfo{number}{02} (\bibinfo{date}{2} \bibinfo{year}{1982}), \bibinfo{pages}{26--41}.
\newblock


\bibitem[Dennis(1974)]%
        {dennis1974dataflow}
\bibfield{author}{\bibinfo{person}{Jack~B. Dennis}.} \bibinfo{year}{1974}\natexlab{}.
\newblock \showarticletitle{First version of a data flow procedure language}. In \bibinfo{booktitle}{\emph{Programming Symposium}}, \bibfield{editor}{\bibinfo{person}{B.~Robinet}} (Ed.). \bibinfo{publisher}{Springer Berlin Heidelberg}, \bibinfo{address}{Berlin, Heidelberg}, \bibinfo{pages}{362--376}.
\newblock
\showISBNx{978-3-540-37819-8}


\bibitem[Feo et~al\mbox{.}(1990)]%
        {sisal}
\bibfield{author}{\bibinfo{person}{John~T. Feo}, \bibinfo{person}{David~C. Cann}, {and} \bibinfo{person}{Rodney~R. Oldehoeft}.} \bibinfo{year}{1990}\natexlab{}.
\newblock \showarticletitle{A report on the sisal language project}.
\newblock \bibinfo{journal}{\emph{J. Parallel and Distrib. Comput.}} \bibinfo{volume}{10}, \bibinfo{number}{4} (\bibinfo{year}{1990}), \bibinfo{pages}{349--366}.
\newblock
\showISSN{0743-7315}
\href{https://doi.org/10.1016/0743-7315(90)90035-N}{doi:\nolinkurl{10.1016/0743-7315(90)90035-N}}
\newblock
\shownote{Data-flow Processing}.


\bibitem[Flanagan et~al\mbox{.}(1993)]%
        {anf}
\bibfield{author}{\bibinfo{person}{Cormac Flanagan}, \bibinfo{person}{Amr Sabry}, \bibinfo{person}{Bruce~F Duba}, {and} \bibinfo{person}{Matthias Felleisen}.} \bibinfo{year}{1993}\natexlab{}.
\newblock \showarticletitle{{The Essence of Compiling with Continuations}}. In \bibinfo{booktitle}{\emph{Proceedings of the ACM SIGPLAN 1993 conference on Programming Language Design and Implementation (PLDI)}}. \bibinfo{pages}{237--247}.
\newblock


\bibitem[Frigo et~al\mbox{.}(1998)]%
        {cilk}
\bibfield{author}{\bibinfo{person}{Matteo Frigo}, \bibinfo{person}{Charles~E. Leiserson}, {and} \bibinfo{person}{Keith~H. Randall}.} \bibinfo{year}{1998}\natexlab{}.
\newblock \showarticletitle{The implementation of the Cilk-5 multithreaded language}.
\newblock \bibinfo{journal}{\emph{SIGPLAN Not.}} \bibinfo{volume}{33}, \bibinfo{number}{5} (\bibinfo{date}{may} \bibinfo{year}{1998}), \bibinfo{pages}{212–223}.
\newblock
\showISSN{0362-1340}
\href{https://doi.org/10.1145/277652.277725}{doi:\nolinkurl{10.1145/277652.277725}}


\bibitem[Gandhi(2022)]%
        {synchromesh-impl}
\bibfield{author}{\bibinfo{person}{Kanishk Gandhi}.} \bibinfo{year}{2022}\natexlab{}.
\newblock \bibinfo{title}{Synchromesh}.
\newblock \bibinfo{howpublished}{\url{https://github.com/kanishkg/synchromesh}}.
\newblock


\bibitem[Girard(1987)]%
        {linearlogic}
\bibfield{author}{\bibinfo{person}{Jean-Yves Girard}.} \bibinfo{year}{1987}\natexlab{}.
\newblock \showarticletitle{Linear logic}.
\newblock \bibinfo{journal}{\emph{Theoretical computer science}} \bibinfo{volume}{50}, \bibinfo{number}{1} (\bibinfo{year}{1987}), \bibinfo{pages}{1--101}.
\newblock


\bibitem[Halstead(1985)]%
        {multilisp}
\bibfield{author}{\bibinfo{person}{Robert~H. Halstead}.} \bibinfo{year}{1985}\natexlab{}.
\newblock \showarticletitle{MULTILISP: a language for concurrent symbolic computation}.
\newblock \bibinfo{journal}{\emph{ACM Trans. Program. Lang. Syst.}} \bibinfo{volume}{7}, \bibinfo{number}{4} (\bibinfo{date}{Oct.} \bibinfo{year}{1985}), \bibinfo{pages}{501–538}.
\newblock
\showISSN{0164-0925}
\href{https://doi.org/10.1145/4472.4478}{doi:\nolinkurl{10.1145/4472.4478}}


\bibitem[Johnson et~al\mbox{.}(2012)]%
        {johnson2012speculative}
\bibfield{author}{\bibinfo{person}{Nick~P Johnson}, \bibinfo{person}{Hanjun Kim}, \bibinfo{person}{Prakash Prabhu}, \bibinfo{person}{Ayal Zaks}, {and} \bibinfo{person}{David~I August}.} \bibinfo{year}{2012}\natexlab{}.
\newblock \showarticletitle{Speculative separation for privatization and reductions}.
\newblock \bibinfo{journal}{\emph{ACM SIGPLAN Notices}} \bibinfo{volume}{47}, \bibinfo{number}{6} (\bibinfo{year}{2012}), \bibinfo{pages}{359--370}.
\newblock


\bibitem[Jones(2001)]%
        {jones2001tackling}
\bibfield{author}{\bibinfo{person}{Simon~Peyton Jones}.} \bibinfo{year}{2001}\natexlab{}.
\newblock \showarticletitle{Tackling the awkward squad: monadic input/output, concurrency, exceptions, and foreign-language calls in Haskell}.
\newblock \bibinfo{journal}{\emph{NATO SCIENCE SERIES SUB SERIES III COMPUTER AND SYSTEMS SCIENCES}}  \bibinfo{volume}{180} (\bibinfo{year}{2001}), \bibinfo{pages}{47--96}.
\newblock


\bibitem[Jones et~al\mbox{.}(1996)]%
        {concurrent-haskell}
\bibfield{author}{\bibinfo{person}{Simon~Peyton Jones}, \bibinfo{person}{Andrew Gordon}, {and} \bibinfo{person}{Sigbjorn Finne}.} \bibinfo{year}{1996}\natexlab{}.
\newblock \showarticletitle{Concurrent haskell}. In \bibinfo{booktitle}{\emph{POPL}}, Vol.~\bibinfo{volume}{96}. \bibinfo{pages}{295--308}.
\newblock


\bibitem[Kallas et~al\mbox{.}(2022)]%
        {pash-osdi-2022}
\bibfield{author}{\bibinfo{person}{Konstantinos Kallas}, \bibinfo{person}{Tammam Mustafa}, \bibinfo{person}{Jan Bielak}, \bibinfo{person}{Dimitris Karnikis}, \bibinfo{person}{Thurston~HY Dang}, \bibinfo{person}{Michael Greenberg}, {and} \bibinfo{person}{Nikos Vasilakis}.} \bibinfo{year}{2022}\natexlab{}.
\newblock \showarticletitle{Practically correct,{Just-in-Time} shell script parallelization}. In \bibinfo{booktitle}{\emph{16th USENIX Symposium on Operating Systems Design and Implementation (OSDI 22)}}. \bibinfo{pages}{769--785}.
\newblock


\bibitem[Karp and Miller(1966)]%
        {Karp1966}
\bibfield{author}{\bibinfo{person}{Richard~M. Karp} {and} \bibinfo{person}{Raymond Miller}.} \bibinfo{year}{1966}\natexlab{}.
\newblock \showarticletitle{{Properties of a Model for Parallel Computation: Determinacy, Termination, Queueing}}.
\newblock \bibinfo{journal}{\emph{SlAM J. of Applied Mathematics}} \bibinfo{volume}{14}, \bibinfo{number}{6} (\bibinfo{date}{11} \bibinfo{year}{1966}), \bibinfo{pages}{1390--1411}.
\newblock


\bibitem[Karp and Miller(1969)]%
        {Karp1969}
\bibfield{author}{\bibinfo{person}{Richard~M. Karp} {and} \bibinfo{person}{Raymond Miller}.} \bibinfo{year}{1969}\natexlab{}.
\newblock \showarticletitle{Parallel Program Schemata}.
\newblock \bibinfo{journal}{\emph{J. Comput. System Sci.}}  \bibinfo{volume}{3} (\bibinfo{year}{1969}), \bibinfo{pages}{147--195}.
\newblock


\bibitem[Khattab et~al\mbox{.}(2023)]%
        {dspy}
\bibfield{author}{\bibinfo{person}{Omar Khattab}, \bibinfo{person}{Arnav Singhvi}, \bibinfo{person}{Paridhi Maheshwari}, \bibinfo{person}{Zhiyuan Zhang}, \bibinfo{person}{Keshav Santhanam}, \bibinfo{person}{Sri Vardhamanan}, \bibinfo{person}{Saiful Haq}, \bibinfo{person}{Ashutosh Sharma}, \bibinfo{person}{Thomas~T Joshi}, \bibinfo{person}{Hanna Moazam}, {et~al\mbox{.}}} \bibinfo{year}{2023}\natexlab{}.
\newblock \showarticletitle{DSPy: Compiling declarative language model calls into self-improving pipelines}.
\newblock \bibinfo{journal}{\emph{arXiv preprint arXiv:2310.03714}} (\bibinfo{year}{2023}).
\newblock


\bibitem[Lewis et~al\mbox{.}(2020)]%
        {lewis2020retrieval}
\bibfield{author}{\bibinfo{person}{Patrick Lewis}, \bibinfo{person}{Ethan Perez}, \bibinfo{person}{Aleksandra Piktus}, \bibinfo{person}{Fabio Petroni}, \bibinfo{person}{Vladimir Karpukhin}, \bibinfo{person}{Naman Goyal}, \bibinfo{person}{Heinrich K{\"u}ttler}, \bibinfo{person}{Mike Lewis}, \bibinfo{person}{Wen-tau Yih}, \bibinfo{person}{Tim Rockt{\"a}schel}, {et~al\mbox{.}}} \bibinfo{year}{2020}\natexlab{}.
\newblock \showarticletitle{Retrieval-augmented generation for knowledge-intensive nlp tasks}.
\newblock \bibinfo{journal}{\emph{Advances in Neural Information Processing Systems}}  \bibinfo{volume}{33} (\bibinfo{year}{2020}), \bibinfo{pages}{9459--9474}.
\newblock


\bibitem[Liargkovas et~al\mbox{.}(2023)]%
        {hs2023hotos}
\bibfield{author}{\bibinfo{person}{Georgios Liargkovas}, \bibinfo{person}{Konstantinos Kallas}, \bibinfo{person}{Michael Greenberg}, {and} \bibinfo{person}{Nikos Vasilakis}.} \bibinfo{year}{2023}\natexlab{}.
\newblock \showarticletitle{Executing Shell Scripts in the Wrong Order, Correctly}. In \bibinfo{booktitle}{\emph{Proceedings of the 19th Workshop on Hot Topics in Operating Systems}} (New York, NY, USA). \bibinfo{publisher}{ACM}, \bibinfo{pages}{103--109}.
\newblock
\showISBNx{9798400701955}
\href{https://doi.org/10.1145/3593856.3595891}{doi:\nolinkurl{10.1145/3593856.3595891}}


\bibitem[Liu(2022)]%
        {llamaindex}
\bibfield{author}{\bibinfo{person}{Jerry Liu}.} \bibinfo{year}{2022}\natexlab{}.
\newblock \bibinfo{booktitle}{\emph{{LlamaIndex}}}.
\newblock
\urldef\tempurl%
\url{https://github.com/jerryjliu/llama_index}
\showURL{%
\tempurl}


\bibitem[Maessen et~al\mbox{.}(1996)]%
        {maessen1996s}
\bibfield{author}{\bibinfo{person}{JW Maessen}, \bibinfo{person}{RS Nikhil}, {and} \bibinfo{person}{JE Stoy}.} \bibinfo{year}{1996}\natexlab{}.
\newblock \showarticletitle{S: an Implicitly Parallel-Calculus with Letrec, Synchronization and Side-E ects}.
\newblock  (\bibinfo{year}{1996}).
\newblock


\bibitem[Maessen et~al\mbox{.}(1995)]%
        {maessen1995semantics}
\bibfield{author}{\bibinfo{person}{Shail Aditya Arvind Jan-Willem Maessen}, \bibinfo{person}{Lennart Augustsson}, {and} \bibinfo{person}{Rishiyur~S Nikhil}.} \bibinfo{year}{1995}\natexlab{}.
\newblock \showarticletitle{Semantics of pH: A parallel dialect of Haskell}. In \bibinfo{booktitle}{\emph{In Proceedings from the Haskell Workshop (at FPCA 95)}}. \bibinfo{pages}{35--49}.
\newblock


\bibitem[Marlow(2013)]%
        {marlow2013parallel}
\bibfield{author}{\bibinfo{person}{Simon Marlow}.} \bibinfo{year}{2013}\natexlab{}.
\newblock \bibinfo{booktitle}{\emph{Parallel and concurrent programming in Haskell: Techniques for multicore and multithreaded programming}}.
\newblock \bibinfo{publisher}{" O'Reilly Media, Inc."}.
\newblock


\bibitem[Marlow et~al\mbox{.}(2014)]%
        {haxl}
\bibfield{author}{\bibinfo{person}{Simon Marlow}, \bibinfo{person}{Louis Brandy}, \bibinfo{person}{Jonathan Coens}, {and} \bibinfo{person}{Jon Purdy}.} \bibinfo{year}{2014}\natexlab{}.
\newblock \showarticletitle{There is no fork: an abstraction for efficient, concurrent, and concise data access}.
\newblock \bibinfo{journal}{\emph{SIGPLAN Not.}} \bibinfo{volume}{49}, \bibinfo{number}{9} (\bibinfo{date}{aug} \bibinfo{year}{2014}), \bibinfo{pages}{325–337}.
\newblock
\showISSN{0362-1340}
\href{https://doi.org/10.1145/2692915.2628144}{doi:\nolinkurl{10.1145/2692915.2628144}}


\bibitem[McGraw(1982)]%
        {mcgraw1982val}
\bibfield{author}{\bibinfo{person}{James~R McGraw}.} \bibinfo{year}{1982}\natexlab{}.
\newblock \showarticletitle{The VAL language: Description and analysis}.
\newblock \bibinfo{journal}{\emph{ACM Transactions on Programming Languages and Systems (TOPLAS)}} \bibinfo{volume}{4}, \bibinfo{number}{1} (\bibinfo{year}{1982}), \bibinfo{pages}{44--82}.
\newblock


\bibitem[Mell(2025)]%
        {artifact}
\bibfield{author}{\bibinfo{person}{Stephen Mell}.} \bibinfo{year}{2025}\natexlab{}.
\newblock \bibinfo{booktitle}{\emph{Artifact for Opportunistically Parallel Lambda Calculus}}.
\newblock
\href{https://doi.org/10.5281/zenodo.16929280}{doi:\nolinkurl{10.5281/zenodo.16929280}}


\bibitem[Mell et~al\mbox{.}(2023)]%
        {ideograph}
\bibfield{author}{\bibinfo{person}{Stephen Mell}, \bibinfo{person}{Osbert Bastani}, {and} \bibinfo{person}{Steve Zdancewic}.} \bibinfo{year}{2023}\natexlab{}.
\newblock \showarticletitle{Ideograph: A Language for Expressing and Manipulating Structured Data}.
\newblock \bibinfo{journal}{\emph{Electronic Proceedings in Theoretical Computer Science}}  \bibinfo{volume}{377} (\bibinfo{date}{April} \bibinfo{year}{2023}), \bibinfo{pages}{65–84}.
\newblock
\showISSN{2075-2180}
\href{https://doi.org/10.4204/eptcs.377.4}{doi:\nolinkurl{10.4204/eptcs.377.4}}


\bibitem[Mustafa et~al\mbox{.}(2023)]%
        {dish2023nsdi}
\bibfield{author}{\bibinfo{person}{Tammam Mustafa}, \bibinfo{person}{Konstantinos Kallas}, \bibinfo{person}{Pratyush Das}, {and} \bibinfo{person}{Nikos Vasilakis}.} \bibinfo{year}{2023}\natexlab{}.
\newblock \showarticletitle{DiSh: Dynamic Shell-Script Distribution}. In \bibinfo{booktitle}{\emph{20th USENIX Symposium on Networked Systems Design and Implementation (NSDI 23)}} (Boston, MA). \bibinfo{publisher}{USENIX Association}, \bibinfo{pages}{341--356}.
\newblock
\showISBNx{978-1-939133-33-5}
\urldef\tempurl%
\url{https://www.usenix.org/conference/nsdi23/presentation/mustafa}
\showURL{%
\tempurl}


\bibitem[Ousterhout(1998)]%
        {ousterhout1998scripting}
\bibfield{author}{\bibinfo{person}{J.K. Ousterhout}.} \bibinfo{year}{1998}\natexlab{}.
\newblock \showarticletitle{Scripting: higher level programming for the 21st Century}.
\newblock \bibinfo{journal}{\emph{Computer}} \bibinfo{volume}{31}, \bibinfo{number}{3} (\bibinfo{year}{1998}), \bibinfo{pages}{23--30}.
\newblock
\href{https://doi.org/10.1109/2.660187}{doi:\nolinkurl{10.1109/2.660187}}


\bibitem[Paszke et~al\mbox{.}(2019)]%
        {pytorch}
\bibfield{author}{\bibinfo{person}{Adam Paszke}, \bibinfo{person}{Sam Gross}, \bibinfo{person}{Francisco Massa}, \bibinfo{person}{Adam Lerer}, \bibinfo{person}{James Bradbury}, \bibinfo{person}{Gregory Chanan}, \bibinfo{person}{Trevor Killeen}, \bibinfo{person}{Zeming Lin}, \bibinfo{person}{Natalia Gimelshein}, \bibinfo{person}{Luca Antiga}, {et~al\mbox{.}}} \bibinfo{year}{2019}\natexlab{}.
\newblock \showarticletitle{Pytorch: An imperative style, high-performance deep learning library}.
\newblock \bibinfo{journal}{\emph{Advances in neural information processing systems}}  \bibinfo{volume}{32} (\bibinfo{year}{2019}).
\newblock


\bibitem[Poesia et~al\mbox{.}(2022)]%
        {poesia2022synchromesh}
\bibfield{author}{\bibinfo{person}{Gabriel Poesia}, \bibinfo{person}{Oleksandr Polozov}, \bibinfo{person}{Vu Le}, \bibinfo{person}{Ashish Tiwari}, \bibinfo{person}{Gustavo Soares}, \bibinfo{person}{Christopher Meek}, {and} \bibinfo{person}{Sumit Gulwani}.} \bibinfo{year}{2022}\natexlab{}.
\newblock \showarticletitle{Synchromesh: Reliable code generation from pre-trained language models}.
\newblock \bibinfo{journal}{\emph{arXiv preprint arXiv:2201.11227}} (\bibinfo{year}{2022}).
\newblock


\bibitem[Rodriguez(1969)]%
        {Rodriguez1969}
\bibfield{author}{\bibinfo{person}{Jorge~E. Rodriguez}.} \bibinfo{year}{1969}\natexlab{}.
\newblock \emph{\bibinfo{title}{A Graph Model for Parallel Computations}}.
\newblock \bibinfo{thesistype}{Ph.\,D. Dissertation}.
\newblock
\newblock
\shownote{MIT-LCS-TR64}.


\bibitem[Saeed et~al\mbox{.}(2023)]%
        {saeed2023querying}
\bibfield{author}{\bibinfo{person}{Mohammed Saeed}, \bibinfo{person}{Nicola De~Cao}, {and} \bibinfo{person}{Paolo Papotti}.} \bibinfo{year}{2023}\natexlab{}.
\newblock \showarticletitle{Querying large language models with SQL}.
\newblock \bibinfo{journal}{\emph{arXiv preprint arXiv:2304.00472}} (\bibinfo{year}{2023}).
\newblock


\bibitem[Schick et~al\mbox{.}(2024)]%
        {schick2024toolformer}
\bibfield{author}{\bibinfo{person}{Timo Schick}, \bibinfo{person}{Jane Dwivedi-Yu}, \bibinfo{person}{Roberto Dess{\`\i}}, \bibinfo{person}{Roberta Raileanu}, \bibinfo{person}{Maria Lomeli}, \bibinfo{person}{Eric Hambro}, \bibinfo{person}{Luke Zettlemoyer}, \bibinfo{person}{Nicola Cancedda}, {and} \bibinfo{person}{Thomas Scialom}.} \bibinfo{year}{2024}\natexlab{}.
\newblock \showarticletitle{Toolformer: Language models can teach themselves to use tools}.
\newblock \bibinfo{journal}{\emph{Advances in Neural Information Processing Systems}}  \bibinfo{volume}{36} (\bibinfo{year}{2024}).
\newblock


\bibitem[Shippole(2023)]%
        {toolformer-impl}
\bibfield{author}{\bibinfo{person}{Enrico Shippole}.} \bibinfo{year}{2023}\natexlab{}.
\newblock \bibinfo{title}{ReAct}.
\newblock \bibinfo{howpublished}{\url{https://github.com/conceptofmind/toolformer}}.
\newblock


\bibitem[Traub(1988)]%
        {traub1988sequential}
\bibfield{author}{\bibinfo{person}{Keenneth~R Traub}.} \bibinfo{year}{1988}\natexlab{}.
\newblock \showarticletitle{Sequential implementation of lenient programming languages}.
\newblock  (\bibinfo{year}{1988}).
\newblock


\bibitem[Tremblay(2000)]%
        {TREMBLAY200043}
\bibfield{author}{\bibinfo{person}{G. Tremblay}.} \bibinfo{year}{2000}\natexlab{}.
\newblock \showarticletitle{Lenient evaluation is neither strict nor lazy}.
\newblock \bibinfo{journal}{\emph{Computer Languages}} \bibinfo{volume}{26}, \bibinfo{number}{1} (\bibinfo{year}{2000}), \bibinfo{pages}{43--66}.
\newblock
\showISSN{0096-0551}
\href{https://doi.org/10.1016/S0096-0551(01)00006-6}{doi:\nolinkurl{10.1016/S0096-0551(01)00006-6}}


\bibitem[Tremblay and Malenfant(2000)]%
        {TREMBLAY200027}
\bibfield{author}{\bibinfo{person}{G. Tremblay} {and} \bibinfo{person}{B. Malenfant}.} \bibinfo{year}{2000}\natexlab{}.
\newblock \showarticletitle{Lenient evaluation and parallelism}.
\newblock \bibinfo{journal}{\emph{Computer Languages}} \bibinfo{volume}{26}, \bibinfo{number}{1} (\bibinfo{year}{2000}), \bibinfo{pages}{27--41}.
\newblock
\showISSN{0096-0551}
\href{https://doi.org/10.1016/S0096-0551(01)00007-8}{doi:\nolinkurl{10.1016/S0096-0551(01)00007-8}}


\bibitem[Vasilakis et~al\mbox{.}(2021)]%
        {pash}
\bibfield{author}{\bibinfo{person}{Nikos Vasilakis}, \bibinfo{person}{Konstantinos Kallas}, \bibinfo{person}{Konstantinos Mamouras}, \bibinfo{person}{Achilles Benetopoulos}, {and} \bibinfo{person}{Lazar Cvetkovi\'{c}}.} \bibinfo{year}{2021}\natexlab{}.
\newblock \showarticletitle{PaSh: light-touch data-parallel shell processing}. In \bibinfo{booktitle}{\emph{Proceedings of the Sixteenth European Conference on Computer Systems}} (Online Event, United Kingdom) \emph{(\bibinfo{series}{EuroSys '21})}. \bibinfo{publisher}{Association for Computing Machinery}, \bibinfo{address}{New York, NY, USA}, \bibinfo{pages}{49–66}.
\newblock
\showISBNx{9781450383349}
\href{https://doi.org/10.1145/3447786.3456228}{doi:\nolinkurl{10.1145/3447786.3456228}}


\bibitem[Whiting and Pascoe(1994)]%
        {dataflow-history}
\bibfield{author}{\bibinfo{person}{Paul~G. Whiting} {and} \bibinfo{person}{Robert S.~V. Pascoe}.} \bibinfo{year}{1994}\natexlab{}.
\newblock \showarticletitle{A history of data-flow languages}.
\newblock \bibinfo{journal}{\emph{IEEE Annals of the History of Computing}}  \bibinfo{volume}{16} (\bibinfo{year}{1994}), \bibinfo{pages}{38--59}.
\newblock
\urldef\tempurl%
\url{https://api.semanticscholar.org/CorpusID:7384421}
\showURL{%
\tempurl}


\bibitem[Wolf et~al\mbox{.}(2020)]%
        {hf-transformers}
\bibfield{author}{\bibinfo{person}{Thomas Wolf}, \bibinfo{person}{Lysandre Debut}, \bibinfo{person}{Victor Sanh}, \bibinfo{person}{Julien Chaumond}, \bibinfo{person}{Clement Delangue}, \bibinfo{person}{Anthony Moi}, \bibinfo{person}{Perric Cistac}, \bibinfo{person}{Clara Ma}, \bibinfo{person}{Yacine Jernite}, \bibinfo{person}{Julien Plu}, \bibinfo{person}{Canwen Xu}, \bibinfo{person}{Teven Le~Scao}, \bibinfo{person}{Sylvain Gugger}, \bibinfo{person}{Mariama Drame}, \bibinfo{person}{Quentin Lhoest}, {and} \bibinfo{person}{Alexander~M. Rush}.} \bibinfo{year}{2020}\natexlab{}.
\newblock \showarticletitle{{Transformers: State-of-the-Art Natural Language Processing}}. \bibinfo{publisher}{Association for Computational Linguistics}, \bibinfo{pages}{38--45}.
\newblock
\urldef\tempurl%
\url{https://www.aclweb.org/anthology/2020.emnlp-demos.6}
\showURL{%
\tempurl}


\bibitem[Yao(2023)]%
        {react-impl}
\bibfield{author}{\bibinfo{person}{Shunyu Yao}.} \bibinfo{year}{2023}\natexlab{}.
\newblock \bibinfo{title}{ReAct}.
\newblock \bibinfo{howpublished}{\url{https://github.com/ysymyth/ReAct}}.
\newblock


\bibitem[Yao et~al\mbox{.}(2024)]%
        {tree-of-thoughts}
\bibfield{author}{\bibinfo{person}{Shunyu Yao}, \bibinfo{person}{Dian Yu}, \bibinfo{person}{Jeffrey Zhao}, \bibinfo{person}{Izhak Shafran}, \bibinfo{person}{Tom Griffiths}, \bibinfo{person}{Yuan Cao}, {and} \bibinfo{person}{Karthik Narasimhan}.} \bibinfo{year}{2024}\natexlab{}.
\newblock \showarticletitle{Tree of thoughts: Deliberate problem solving with large language models}.
\newblock \bibinfo{journal}{\emph{Advances in Neural Information Processing Systems}}  \bibinfo{volume}{36} (\bibinfo{year}{2024}).
\newblock


\bibitem[Zheng et~al\mbox{.}(2023)]%
        {sglang}
\bibfield{author}{\bibinfo{person}{Lianmin Zheng}, \bibinfo{person}{Liangsheng Yin}, \bibinfo{person}{Zhiqiang Xie}, \bibinfo{person}{Jeff Huang}, \bibinfo{person}{Chuyue Sun}, \bibinfo{person}{Cody~Hao Yu}, \bibinfo{person}{Shiyi Cao}, \bibinfo{person}{Christos Kozyrakis}, \bibinfo{person}{Ion Stoica}, \bibinfo{person}{Joseph~E Gonzalez}, {et~al\mbox{.}}} \bibinfo{year}{2023}\natexlab{}.
\newblock \showarticletitle{Efficiently programming large language models using sglang}.
\newblock \bibinfo{journal}{\emph{arXiv preprint arXiv:2312.07104}} (\bibinfo{year}{2023}).
\newblock


\end{thebibliography}

\end{document}